\newtheorem{lemma}{Lemma}
\newtheorem{remark}{Remark}
\newlength{\gnat}
\newcolumntype{R}{>{\scriptsize\raggedleft\arraybackslash$}p{\gnat}<{\hspace{0.33\gnat}$}}
\newcommand{\bse}{\boldsymbol{e}}
\newcommand{\bsf}{\boldsymbol{f}}
\newcommand{\bsx}{\boldsymbol{x}}
\newcommand{\bsy}{\boldsymbol{y}}
\newcommand{\bsr}{\boldsymbol{r}}
\newcommand{\bss}{\boldsymbol{s}}
\begin{document}
\begin{frontmatter}
\title{Double Exponential Instability of Triangular Arbitrage Systems}
\author[uc]{Rod Cross}
\ead{rod.cross@strath.ac.uk}

\author[ittp]{Victor Kozyakin\corref{t1}}
\ead{kozyakin@iitp.ru}

\address[uc]{Department of Economics,
University of Strathclyde,\\ Sir William Duncan Building, 130
Rottenrow, Glasgow, G4 OGE, Scotland}

\address[ittp]{Institute for Information Transmission Problems, Russian
Academy of Sciences,\\
Bolshoj Karetny lane 19, Moscow 127994 GSP-4, Russia}

\begin{abstract}
If financial markets displayed the informational efficiency postulated in
the efficient markets hypothesis (EMH), arbitrage operations would be
self-extinguishing. The present paper considers arbitrage sequences in
foreign exchange (FX) markets, in which trading platforms and information
are fragmented. In \cite{KozCalPok:ArXiv10} and \cite{CrossKPP:MECA12} it
was shown that sequences of triangular arbitrage operations in FX markets
containing $4$ currencies and trader-arbitrageurs tend to display
periodicity or grow exponentially rather than being self-extinguishing.
This paper extends the analysis to $5$ or higher-order currency worlds.
The key findings are that in a $5$-currency world arbitrage sequences may
also follow an exponential law as well as display periodicity, but that in
higher-order currency worlds a double exponential law may additionally
apply. There is an ``inheritance of instability'' in the higher-order
currency worlds. Profitable arbitrage operations are thus endemic rather
that displaying the self-extinguishing properties implied by the EMH.
\end{abstract}

\begin{keyword}
Limits to arbitrage\sep Recurrent sequences\sep Matrix products\sep
Asyn\-chron\-ous systems

\medskip
\noindent\textit{MSC 2000:} 91B26, 91B54, 91B64, 15A60

\medskip
\noindent\textit{JEL Classification}: C60, F31, D82
\end{keyword}

\end{frontmatter}


\section{Introduction}\label{S-intro}

Arbitrage operations are profitable if a good or asset can be bought for a
lower price than that for which it can be sold. Such operations are distinct
from speculation in that there is little or no capital or risk exposure.
Arbitrage profits are made by trading on prices that are already posted,
rather than by trading on speculative guesses about what prices posted in
the future might be. The \textsl{New Palgrave Dictionary} defines arbitrage
as ``an investment strategy that guarantees a positive payoff in some
contingency with no possibility of negative payoff and with no net
investment'', \cite{dybvig2008}.

In the economics and finance literature the exploitation of profitable
arbitrage opportunities is postulated to eliminate price discrepancies
between goods and assets that are in some sense ``identical.'' This
postulate, that arbitrage operations are self-extinguishing, implies the
``law of one price'', this terminology having been coined in the purchasing
power parity explanation of foreign exchange rates, \cite{Cassel1916}. A
related no-arbitrage condition is that of covered interest parity (CIP),
whereby the ratio of forward to spot exchange rates for currency pairs is
equal to the ratio of the interest rates on comparable assets denominated in
the two currencies over the forward period in question
\cite[p.~13]{keynes1923}. No-arbitrage condition form the bedrock of
mainstream theory in finance, as embodied in the Modigliani--Miller theorem
regarding corporate capital structure, the Black--Scholes model of option
pricing and in the arbitrage pricing model of asset prices, \cite{ross1978}.

The present paper deals with arbitrage processes in foreign exchange (FX)
markets. The Bank for International Settlements \cite[p.~5]{BIS11} provides
a useful classification of arbitrage strategies in such markets.
\emph{Classical arbitrage} aims to exploit discrepancies between actual
exchange rates and no-arbitrage conditions, such as that the exchange rates
for currency pairs are consistent with their cross-exchange rates.
\emph{Latency arbitrage} aims to exploit time lags between trades being
initiated and FX price quotes being revised. \emph{Liquidity imbalance
arbitrage} attempts to take advantage of order book imbalances between
different trading platforms. \emph{Complex event arbitrage} is geared
towards properties of FX rates, such as mean-reversion and momentum, or
towards systematic patterns in the way FX response to new information being
released. This paper focuses on triangular arbitrage operations, arising
from discrepancies between the exchange rates for currency pairs and the
cross-exchange rates for the currencies involved. To simplify the analysis
we ignore the difference between FX bid and ask prices and do not take
account of the cost of executing FX transactions. For evidence that there
are arbitrage profits to be made, net of bid-ask spreads and transactions
costs, see \cite{MTY} and \cite{ARS08}.

If the efficient market hypothesis (EMH), formulated by \cite{Sam65} and
\cite{Fama70}, applied to FX markets, exchange rates would reflect all the
information relevant to their determination and there would be no profits to
be had from FX arbitrage trades. This would involve an ``arbitrage paradox''
\citep{GS76}: if arbitrage opportunities did not exist, there would be no
incentive for FX market participants to monitor exchange rates, so
profitable arbitrage opportunities could well arise. There is also the ``no
trade'' problem. If FX markets were informationally efficient, as the EMH
postulates, you would expect to observe periods in which FX transactions
volumes dwindled when no new information was being generated. Instead FX
transactions volumes remain substantial, even during quiescent market
conditions.

The microstructure of FX markets is fragmented, implying that information
about profitable arbitrage opportunities is also likely to be fragmented.
The contrast is between the ``lit'' areas where information on some of the
electronic trading platforms is publicly available, and the ``dark pools'',
where information on FX transactions conducted directly between banks and
their end-user clients is initially private. BIS data for 2010 document this
FX market fragmentation. Electronic booking systems (EBS), such as the
London-based EBS and Reuters, took $18.8\%$ of global turnover; multi-bank
electronic communication networks (ECNs), such as the US-based Currenex,
Hotspot FX and FXall, accounted for $11.1\%$ of global turnover; and
single-bank ECNs took $11.4\%$ of global turnover. Inter-dealer trades, most
of which are executed electronically, constituted $18.5\%$ of turnover. The
``dark pools'' tend to be in the non-electronic execution segments of the
market, with customer-direct transactions between banks and end-user clients
taking $24.4\%$ of global turnover, and voice-broker trades accounting for
$15.9\%$ of global turnover \citep[p.~16]{BIS}.

There is also evidence of ``home bias'' in FX transactions. FX trades
initiated in the US and Canada tend to take place during North American
trading hours, those initiated in Japan and Australia tend to occur during
Asian trading hours, and so on \citep[Table~2]{Souza}. Estimation of
impulse-response functions for the way exchange rates respond to local order
flow data indicate that ``dealers operating both at the same time and in the
same geographic region as fundamentally driven customers have a natural
informational advantage'' \citep[pp.~23--24]{Souza}. In a similar vein,
\cite{CM02} found that Tokyo-based traders had an informational advantage
over foreign-based traders regarding the course of the Japanese yen exchange
rates.

In previous papers, \cite{KozCalPok:ArXiv10} and \cite{CrossKPP:MECA12}, the
implications of this fragmentation of information on FX markets were
analysed by a combinatorial analysis of the different possible sequences of
arbitrage operations. The ``home bias'' asymmetry in information is
represented by having FX trader-arbitrageurs initially know only the
exchange rates involving their own domestic currency. The question is then
whether triangular arbitrage transactions are profitable because the
cross-exchange rates are misaligned in relation to the exchange rates for
the currency pairs. The arbitrage sequences that will be pursued would
depend on which trader-arbitrageur first discovers the mis-alignment of the
cross-exchange rates. The arbitrage operations are triangular, so, for
example, if the US dollar trader were to discover that the euro--sterling
rate was out of line with exchange rates for the dollar--euro and
dollar--sterling currency pairs, an arbitrage profit could be made by
selling dollars for sterling and using the sterling to buy euros.

In the $3$-currency case arbitrage operations are reasonable
straightforward. The order in which trader-arbi\-trageurs discover
information about cross-exchange rate discrepancies makes a difference to
the arbitrage sequences that will be pursued, and to the resulting new
no-arbitrage ensemble of exchange rates that will emerge, but one arbitrage
transaction suffices to eliminate the arbitrage opportunity. The
$4$-currency case is significantly more complicated, there being
$^{4}\!P_{3} =24$ possible triangular arbitrage operations. In
\cite{KozCalPok:ArXiv10} and \cite{CrossKPP:MECA12} it was shown that in
this $4$-currency world arbitrage sequences tend to be periodic in nature or
display exponential behaviour, showing no tendency to approach a
no-arbitrage ensemble of exchange rates in which there are no profitable
opportunities. The intuitive explanation for this periodicity is that the
exploitation of one arbitrage opportunity has ``ripple effects,'' disturbing
the ensemble of exchange rates and so creating further active arbitrage
opportunities.

The task of the present paper is to extend the analysis of FX arbitrage
operations to a world of $d$-currencies, where $d$ is the number of
currencies and trade-arbitrageurs involved. One key finding is that in the
$(d\geq5)$-currency case the surprising result is that the arbitrage
sequences may follow a double exponential process as well as the periodicity
or exponential behaviour observed in the $4$-currency case. There is thus an
``inheritance of instability'' as we move to higher-order currency worlds.

The structure of the paper is as follows. In Section~\ref{S-problem} a
mathematical formulation of the problem is given. In Section~\ref{S-linprob}
it is shown that the arbitrage dynamics, initially specified in terms of
some nonlinear operations, can be described by asynchronous matrix products.
For the general case of an arbitrary number of trader-arbitrageurs the
properties of the resulting matrices are studied in
Section~\ref{S-asynchro}. The exposition of Sections~\ref{S-problem} and
\ref{S-linprob} follows the work of \cite{KozCalPok:ArXiv10}. In
Section~\ref{S-d4} basic results from \cite{KozCalPok:ArXiv10} and
\cite{CrossKPP:MECA12} about the arbitrage dynamics for an FX currency
market with $4$ traders are recalled. In particular, these results
demonstrate that the exchange rates in a foreign exchange currency market
with $4$ traders, under appropriate choice of the arbitrage sequences,
display periodic or exponential behaviour. In Section~\ref{S-d5} we
construct an example showing that in the case of $5$ or more arbitrage
traders the situation may be, in a sense, even worse --- in this case the
exchange rates may change not only periodically or grow exponentially, as in
the case of $4$ traders, but they may grow in accordance with a double
exponential law. In~\ref{A-d4} and \ref{A-d5} an explicit form for all the
matrices involved in the description of the arbitrage dynamics for the cases
$d=4$ and $d=5$, respectively, is presented.

\section{Statement of the Problem}\label{S-problem}
Consider a foreign exchange (FX) currency market involving $d$ currencies
involving the exchange rates $r_{ij}$ of the $i$-th currency to the $j$-th
currency, $j\neq i$. Naturally, the value of $r_{ji}$, the exchange rate of
the $j$-th currency to the $i$-th currency, is reciprocal to the value of
$r_{ij}$:
\[
r_{ji}=\frac{1}{r_{ij}}.
\]

The exchange rates vary with time depending on the state of the market and
on the relations between the exchange rates. For instance, if at some moment
the trader of the currency $i$ realises that exchange rate of the $i$-th
currency to the $j$-th currency, via the intermediate currency $k$, may
bring a profit, i.e.,
\[
r_{ik}\cdot r_{kj} > r_{ij},\quad i\neq j,~ k\neq i,j,
\]
then he can establish a new exchange rate for the currency
\[
r_{ij,\textrm{new}}=r_{ik}\cdot r_{kj}.
\]
So, in what follows, we will suppose that the exchange rates in our
arbitrage system are updated in accordance with the following law:
\begin{align*}
r_{ij,\textrm{new}}&=\max\left\{r_{ik}\cdot
r_{kj},~r_{ij,\textrm{old}}\right\},\\
r_{ji,\textrm{new}}&=\frac{1}{r_{ij,\textrm{new}}},
\end{align*}
and simultaneously the exchange rates for only one pair $(i,j)$ of
currencies may be updated.

By introducing the auxiliary quantities
\[
a_{ij}=\log r_{ij},\quad \forall i\neq j,
\]
it is possible to pass from the ``multiplicative'' statement of the problem
about arbitrage dynamics given above to the ``additive'' statement of the
problem, which in this case will look as follows. Given a skew-symmetric
$d\times d$ matrix $A=(a_{ij})$, for a triplet of pairwise distinct indices
$(i,j,k)$, $i\neq j$, $k\neq i,j$, the elements $a_{ij}$ and $a_{ji}$ are
updated in accordance with the following law:
\begin{align}\label{E-balance1}
    a_{ij,\textrm{new}}&=\max\left\{a_{ik}+a_{kj},~a_{ij,\textrm{old}}\right\},\\
\label{E-balance2}    a_{ji,\textrm{new}}&=-a_{ij,\textrm{new}}.
\end{align}
The triplet of indices $\omega=(i,j,k)$, $i\neq j$, $k\neq i,j$,
will be called  the \emph{arbitrage rule}.

\section{Linear Reformulation}\label{S-linprob}

Since by \eqref{E-balance1} and \eqref{E-balance2} the element $a_{ij}$ of
the matrix $A$ is updated ``simultaneously'' with the updating of the
symmetric element $a_{ji}$, then it is reasonable to speak about updating of
the pair $(a_{ij},a_{ji})$. Then, according to \eqref{E-balance1} and
\eqref{E-balance2}, the pair of elements $(a_{ij},a_{ji})$ can be updated by
one of two following scenarios. Either at first the element $a_{ij}$ is
updated in accordance with \eqref{E-balance1}, and then the element $a_{ji}$
is updated in accordance with \eqref{E-balance2}, or at first the element
$a_{ji}$ is updated by the formula
\[
a_{ji,\textrm{new}}=\max\left\{a_{jk}+a_{ki},~a_{ji,\textrm{old}}\right\},
\]
which, in view of the skew-symmetry of the matrix $A$, leads to the formula
\[
a_{ji,\textrm{new}}=\max\left\{-a_{kj}-a_{ik},~-a_{ij,\textrm{old}}\right\}=
-\min\left\{a_{ik}+a_{kj},~a_{ij,\textrm{old}}\right\},
\]
and then the element $a_{ij}$ is updated as follows
\[
a_{ij,\textrm{new}}=-a_{ji,\textrm{new}},
\]
which amounts to the final expression for $a_{ij,\textrm{new}}$:
\begin{equation}\label{E-balance3}
a_{ij,\textrm{new}}=\min\left\{a_{ik}+a_{kj},~a_{ij,\textrm{old}}\right\}.
\end{equation}

Formulae \eqref{E-balance1} and \eqref{E-balance3} show that in the process
of updating of the pair $(a_{ij},a_{ji})$ the element $a_{ij}$,
independently of the scenario of updating, either is not changed (and then
the matrix $A$ is not changed, too) or this element may change its value as
follows:
\begin{equation}\label{E-balance4}
a_{ij,\textrm{new}}=a_{ik}+a_{kj},
\end{equation}
whereas the symmetric element $a_{ji}$ changes its value in accordance with
\eqref{E-balance2} which is equivalent to
\begin{equation}\label{E-balance5}
    a_{ji,\textrm{new}}=a_{jk}+a_{ki}.
\end{equation}

So, the following lemma is proved.

\begin{lemma}\label{Lem-direct}
If the arbitrage law \eqref{E-balance1} and \eqref{E-balance2} holds
then any pair of elements $(a_{ij},a_{ji})$ either is not changed
or is changed by the \textbf{linear law} \eqref{E-balance4},
\eqref{E-balance5}.
\end{lemma}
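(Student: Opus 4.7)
The plan is to argue by a direct case split on which of the two arguments realizes the maximum in \eqref{E-balance1}. Since $\max\{a_{ik}+a_{kj},\,a_{ij,\textrm{old}}\}$ equals one of its two arguments, the update of $a_{ij}$ falls into exactly one of two branches, and \eqref{E-balance2} then fixes $a_{ji,\textrm{new}}$ from $a_{ij,\textrm{new}}$ by a single sign change. So the whole analysis reduces to those two branches.

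In the first branch one has $a_{ij,\textrm{new}}=a_{ij,\textrm{old}}$; by \eqref{E-balance2} together with the skew-symmetry of the original matrix $A$, this gives $a_{ji,\textrm{new}}=-a_{ij,\textrm{old}}=a_{ji,\textrm{old}}$, so the pair $(a_{ij},a_{ji})$ is left unchanged. In the second branch $a_{ij,\textrm{new}}=a_{ik}+a_{kj}$, which is exactly \eqref{E-balance4}; applying \eqref{E-balance2} and then the skew-symmetry identities $-a_{ik}=a_{ki}$ and $-a_{kj}=a_{jk}$ yields $a_{ji,\textrm{new}}=-(a_{ik}+a_{kj})=a_{jk}+a_{ki}$, which is \eqref{E-balance5}. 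This exhausts the cases.

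There is no real obstacle: the entire nonlinearity of the update rule sits in the branch selection inside the $\max$, and once a branch is fixed the transformation of the pair $(a_{ij},a_{ji})$ is manifestly linear in the entries of $A$. The only point worth flagging is that the excerpt also considers an alternative scenario in which $a_{ji}$ is updated first via a $\min$, leading to \eqref{E-balance3}; since both \eqref{E-balance1} and \eqref{E-balance3} select from the same two candidate values $\{a_{ij,\textrm{old}},\,a_{ik}+a_{kj}\}$, just by $\max$ versus $\min$, the same dichotomy applies and the conclusion of the lemma is scenario-independent.
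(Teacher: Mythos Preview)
Your argument is correct and matches the paper's own reasoning: the paper's proof is the discussion immediately preceding the lemma, which likewise observes that under either updating scenario the value $a_{ij,\textrm{new}}$ is one of the two candidates $a_{ij,\textrm{old}}$ or $a_{ik}+a_{kj}$, and then derives \eqref{E-balance5} from skew-symmetry. The only cosmetic difference is that the paper introduces the $\min$-scenario \eqref{E-balance3} before stating the dichotomy, whereas you handle the $\max$-scenario first and treat \eqref{E-balance3} as a remark; the logical content is the same.
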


Clearly, the converse assertion is also true.

\begin{lemma}\label{Rem-inverse}
If a pair of elements $(a_{ij},a_{ji})$ were to be updated by the law
\eqref{E-balance4}, \eqref{E-balance5}, i.e.,
\[
(a_{ij},a_{ji})\mapsto (a_{ij,\mathrm{new}},a_{ji,\mathrm{new}})
\]
then one of two scenarios of updating of the pair of elements
$(a_{ij},a_{ji})$ can be selected (at first updating $a_{ij}$ by formula
\eqref{E-balance1} and then adjusting $a_{ji}$ by the formula
\eqref{E-balance2}; or at first updating $a_{ji}$ by the formula
\eqref{E-balance1} and then adjusting $a_{ij}$ by the formula
\eqref{E-balance2}), under which the same pair of new elements
$(a_{ij,\mathrm{new}},a_{ji,\mathrm{new}})$ will be obtained. Namely,
\begin{itemize}
\item if $a_{ij,\mathrm{new}}>a_{ij}$ after updating the pair
    $(a_{ij},a_{ji})$ then, to interpret such an updating by the arbitrage
    law \eqref{E-balance1} and \eqref{E-balance2}, one should first apply
    formula \eqref{E-balance1} and then formula \eqref{E-balance2};

\item if $a_{ij,\mathrm{new}}<a_{ij}$ after updating the pair
    $(a_{ij},a_{ji})$ then, to interpret such an updating by the arbitrage
    law \eqref{E-balance1} and \eqref{E-balance2}, one should first apply
    formula \eqref{E-balance2} and then formula \eqref{E-balance1};

\item and last, if $a_{ij,\mathrm{new}}=a_{ij}$ then the order of updating
    the elements of the pair $(a_{ij},a_{ji})$, as well as whether such an
    updating took place at all, is inessential.
\end{itemize}
\end{lemma}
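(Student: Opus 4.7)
The plan is to argue by cases on the sign of $a_{ij,\mathrm{new}} - a_{ij}$, using the two equivalent formulations of the two scenarios that were derived in the paragraphs preceding the lemma. Scenario~A (first apply \eqref{E-balance1} to $a_{ij}$, then \eqref{E-balance2} to $a_{ji}$) gives $a_{ij,\mathrm{new}} = \max\{a_{ik}+a_{kj},\,a_{ij}\}$, whereas Scenario~B (first update $a_{ji}$ by its own max-rule and then adjust $a_{ij}$ by the sign flip) gives $a_{ij,\mathrm{new}} = \min\{a_{ik}+a_{kj},\,a_{ij}\}$, i.e.\ exactly \eqref{E-balance3}. In both scenarios the companion element is pinned down by $a_{ji,\mathrm{new}} = -a_{ij,\mathrm{new}}$, and this is automatically consistent with the linear law, since the skew-symmetry of $A$ makes $a_{jk}+a_{ki} = -(a_{ik}+a_{kj})$, so \eqref{E-balance5} is forced as soon as \eqref{E-balance4} holds.

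For the first bullet I assume $a_{ij,\mathrm{new}} > a_{ij}$, which by \eqref{E-balance4} reads $a_{ik}+a_{kj} > a_{ij}$. Then $\max\{a_{ik}+a_{kj},\,a_{ij}\} = a_{ik}+a_{kj}$, so Scenario~A reproduces \eqref{E-balance4}, and the subsequent application of \eqref{E-balance2} yields $a_{ji,\mathrm{new}} = -(a_{ik}+a_{kj}) = a_{jk}+a_{ki}$, matching \eqref{E-balance5}. For the second bullet I assume $a_{ij,\mathrm{new}} < a_{ij}$, so that $a_{ik}+a_{kj} < a_{ij}$; then $\min\{a_{ik}+a_{kj},\,a_{ij}\} = a_{ik}+a_{kj}$, and Scenario~B reproduces \eqref{E-balance4}, after which \eqref{E-balance5} again follows by skew-symmetry. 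The third bullet is immediate: if $a_{ij,\mathrm{new}} = a_{ij}$, then also $a_{ji,\mathrm{new}} = a_{ji}$, both scenarios leave the pair untouched, and there is nothing to interpret.

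The one non-routine observation is recognising that the ``$\max$'' and ``$\min$'' outcomes of the two scenarios together exhaust the two possible directions in which the linear law \eqref{E-balance4}--\eqref{E-balance5} can move the pair: an \emph{increase} of $a_{ij}$ can only come from Scenario~A, a \emph{decrease} only from Scenario~B, and an unchanged value is produced by either. Once this dichotomy is in place, the three bullets are one-line verifications, and I do not anticipate any obstacle beyond keeping the roles of the indices $i$, $j$, $k$ straight in the skew-symmetric flip.
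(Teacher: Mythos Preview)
Your proposal is correct and is exactly the argument the paper has in mind: the paper itself gives no proof beyond the phrase ``Clearly, the converse assertion is also true,'' relying on the discussion immediately preceding Lemma~\ref{Lem-direct}, where the two scenarios are already shown to yield $\max\{a_{ik}+a_{kj},a_{ij}\}$ and $\min\{a_{ik}+a_{kj},a_{ij}\}$ respectively. Your case split on the sign of $a_{ij,\mathrm{new}}-a_{ij}$ simply makes explicit what the paper treats as self-evident, so there is no methodological difference to report.
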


The previous lemmata indicate that under the max-statement of the problem,
as well as under its linear reformulation, it suffices to investigate only
the dynamics of the pairs $(a_{ij},a_{ji})$ of mutually symmetric elements
since such an analysis \emph{allows us to reproduce also the scenarios of
intra-pair updating of the elements $a_{ij}$ under the max-statement of the
problem}.

\section{Asynchronous Matrix Products}\label{S-asynchro}

Define in the space of all skew-symmetric $d\times d$ matrices $A=(a_{ij})$
a basis $\{\bse_{ij}\}$, $1\le i<j\le d$, by enumerating in some order the
upper off-diagonal elements of such matrices. For example, let us define the
element $\bse_{ij}$, $1\le i<j\le d$, of this basis as the skew-symmetric
matrix whose element $a_{ij}$ is equal to $1$ and the others vanish. Then
each matrix in this basis can be represented as a column-vector:
\begin{equation}\label{E-matcoord}
\bsx=\left\{a_{12}, a_{13},\ldots, a_{1d}, a_{23},
a_{24},\ldots\,a_{2d},\ldots,a_{d-1,d}\right\}^{T}\in\mathbb{R}^{d(d-1)/2},
\end{equation}
and computation of the new vector $\bsx_{\textrm{new}}$ by the old one
$\bsx_{\textrm{old}}$ will be defined by the expression
\begin{equation}\label{E-liniter}
\bsx_{\textrm{new}}=B_{\omega}\bsx_{\textrm{old}}.
\end{equation}
Here the matrix $B_{\omega}$ is determined by the arbitrage rule
$\omega=(i,j,k)$ applied to this step of updating and has the form typical
in the theory of asynchronous systems, see, e.g., \cite{AKKK:92:e}: all of
its rows except one coincide with the rows of the identity matrix while one
row contains exactly two non-zero elements specified by formula
\eqref{E-balance4}.

If we define on the space of vectors \eqref{E-matcoord} the usual Euclidean
inner-product $\langle\cdot,\cdot\rangle$, and then use the matrix
$B_{\omega}$ in \eqref{E-liniter} corresponding to the arbitrage rule
$\omega=(i,j,k)$, we will get the following representation\footnote{Recall
that always $i<j$ and $k\neq i,j$.}:
\begin{equation}\label{E-defB}
B_{\omega}\bsx=\begin{cases}
\bsx+\langle -\bse_{ij}-\bse_{ki}+\bse_{kj},\bsx\rangle
\bse_{ij}&\textrm{for}\quad k<i<j,\\
\bsx+\langle -\bse_{ij}+\bse_{ik}+\bse_{kj},\bsx\rangle
\bse_{ij}&\textrm{for}\quad i<k<j,\\
\bsx+\langle -\bse_{ij}+\bse_{ik}-\bse_{jk},\bsx\rangle
\bse_{ij}&\textrm{for}\quad i<j<k.
\end{cases}
\end{equation}
Since all the vectors \eqref{E-matcoord} are column-vectors, then the
relations \eqref{E-defB} can also be rewritten in the following matrix
form:
\begin{equation}\label{E-defBx}
B_{\omega}=\begin{cases}
I-\bse_{ij}(\bse_{ij}^{T}+\bse_{ki}^{T}-\bse_{kj}^{T})&\textrm{for}\quad
k<i<j,\\
I-\bse_{ij}(\bse_{ij}^{T}-\bse_{ik}^{T}-\bse_{kj}^{T})&\textrm{for}\quad
i<k<j,\\
I-\bse_{ij}(\bse_{ij}^{T}-\bse_{ik}^{T}+\bse_{jk}^{T})&\textrm{for}\quad
i<j<k,
\end{cases}
\end{equation}
where the upper index $T$ denotes transposition of a vector or a matrix.

Let us make some remarks resulting from formulae \eqref{E-defB} and
\eqref{E-defBx}.

\begin{remark}\label{Rem1}
Each matrix $B_{\omega}$ is a matrix with \textsl{integer entries}. This
means that, when considering convergence issues for the matrix products
involving such matrices, the convergence, provided that it takes place,
\textsl{will be always achieved by a finite number of steps}. \qed
\end{remark}

\begin{remark}\label{Rem2}
Each matrix $B_{\omega}$ is a \textsl{projector}. \qed
\end{remark}

\begin{remark}\label{Rem5}
Since the procedure of sequential updating of vectors $\boldsymbol{x}$
according to \eqref{E-liniter} linearly depends on the initial vectors, then
to analyse their convergence it suffices to consider only the initial
vectors taking the values of the standard basis vectors $\{\bse_{ij}\}$.
Then the whole problem becomes an integer-valued problem, i.e, a
\emph{combinatorial} one. \qed
\end{remark}

\begin{remark}\label{Rem3}
The procedure of constructing the matrices $B_{\omega}$ is very similar to
that of constructing the so-called ``mixtures'' of matrices under
investigation in asynchronous systems \cite{AKKK:92:e}. The difference is
that in \cite{AKKK:92:e} each coordinate of the state vector
$\bsx_{\textrm{new}}$ in \eqref{E-liniter} is updated by a single rule,
whereas in our case each coordinate of the state vector
$\bsx_{\textrm{new}}$ in \eqref{E-liniter} may be updated in accordance with
several rules --- see Section~\ref{S-d4} below. This is explained by the
fact that the set of all the arbitrage rules $(i,j,\cdot)$ having the same
first two indices $i,j$, in general, contains more than one element.

Investigation of convergence of the procedures \eqref{E-liniter} has very
much in common with investigation of the joint/gen\-eral\-ized spectral
radius of the family of all the matrices $B_{\omega}$, see, e.g.,
\cite{Jungers:09} and bibliography therein.

So, a great variety of results from the theory of asynchronous systems and
from the theory of the joint/generalized spectral radius might be helpful
here. \qed
\end{remark}

Clearly the arbitrage procedure \eqref{E-liniter} may have limiting fixed
states (with respect to all the matrices $B_{\omega}$) if and only if the
eigenspaces of the matrices $B_{\omega}$, corresponding to the eigenvalue
$1$, have nontrivial intersection $\mathbb{F}$. To find this subspace
$\mathbb{F}$ it is necessary to solve the following system of linear
equations:
\[
\bsx=B_{\omega}\bsx,\quad \forall~\omega=(i,j,k):~ 1\le i<j\le d,~ k\neq
i,j.
\]

\begin{lemma}\label{Lem-fixpoints}
The subspace $\mathbb{F}$ of common fixed points of all the matrices
$\{B_{\omega}\}$ consists of the column-vectors $\bsx=(x_{ij})^{T}$
satisfying
\[
x_{ij}=-x_{1i}+x_{1j},\qquad 2\le i<j\le d,
\]
where $x_{12},x_{13},\ldots,x_{1d}$ are free variables.
\end{lemma}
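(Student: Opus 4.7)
The plan is to exploit the rank-one form $B_\omega\bsx = \bsx + \langle v_\omega,\bsx\rangle\bse_{ij}$ read off from \eqref{E-defB} for each arbitrage rule $\omega=(i,j,k)$. A vector $\bsx$ is fixed by $B_\omega$ precisely when the single scalar equation $\langle v_\omega,\bsx\rangle = 0$ holds, so $\mathbb{F}$ is cut out by the collection of all such linear equations, one per admissible triple $(i,j,k)$ with $1\le i<j\le d$ and $k\neq i,j$.

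The first step is to unify the three sub-cases in \eqref{E-defB}. I would extend $\bsx$ to a full skew-symmetric array by setting $x_{ji}=-x_{ij}$ whenever $i<j$. Then in case $k<i<j$ the equation $-x_{ij}-x_{ki}+x_{kj}=0$ becomes $x_{ij}=x_{ik}+x_{kj}$; in case $i<k<j$ the identity $x_{ij}=x_{ik}+x_{kj}$ is obtained directly; and in case $i<j<k$ the equation $-x_{ij}+x_{ik}-x_{jk}=0$ rearranges to the same $x_{ij}=x_{ik}+x_{kj}$. Thus $\mathbb{F}$ is exactly the set of skew-symmetric arrays satisfying the \emph{cocycle identity} $x_{ij}=x_{ik}+x_{kj}$ for every triple of pairwise distinct indices $i,j,k\in\{1,\dots,d\}$.

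Finally I would parametrise such arrays by their ``first row''. Sufficiency is immediate: if $x_{ij}=x_{1j}-x_{1i}$ for all $i<j$ (extended skew-symmetrically), then for any pairwise distinct $i,j,k$ one has
\[
x_{ik}+x_{kj}=(x_{1k}-x_{1i})+(x_{1j}-x_{1k})=x_{1j}-x_{1i}=x_{ij},
\]
so the cocycle identity holds. For necessity, given any $2\le i<j\le d$, apply the cocycle identity to the triple $(i,j,1)$, which falls in the case $k<i<j$; this yields exactly $x_{ij}=-x_{1i}+x_{1j}$. The coordinates $x_{12},\ldots,x_{1d}$ remain free, since every further cocycle equation is automatically consistent with the resulting parametrisation. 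The only real obstacle is the bookkeeping in unifying the three sub-cases of \eqref{E-defB} via skew-symmetry; once that is done, the result is the familiar fact that every $1$-cocycle on the complete graph $K_d$ is a coboundary, made explicit here by choosing vertex $1$ as the base point.
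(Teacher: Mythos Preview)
Your argument is correct. The paper's own proof consists of the single line ``By induction with respect to $d\ge 3$,'' with no further details, so your direct (non-inductive) argument is a genuinely different route. By extending $\bsx$ skew-symmetrically you collapse the three sub-cases of \eqref{E-defB} into the single cocycle identity $x_{ij}=x_{ik}+x_{kj}$, and then the specialisation $k=1$ immediately gives necessity while the telescoping computation gives sufficiency. This is cleaner and more conceptual than an induction on $d$: it explains \emph{why} the fixed-point subspace has this form (every skew-symmetric $1$-cocycle on $K_d$ is a coboundary of a potential based at vertex~$1$), and it makes the dimension count $\dim\mathbb{F}=d-1$ transparent. An inductive proof would presumably observe that the equations for $d$ currencies contain those for $d-1$ currencies as a subsystem and then handle the new row and column, which is more mechanical but equally valid; your approach avoids that bookkeeping entirely. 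One small stylistic point: you might state explicitly that the convention $x_{11}=0$ (or simply that the formula $x_{ij}=x_{1j}-x_{1i}$ is trivially satisfied when $i=1$) covers the coordinates $x_{1j}$ themselves, so that the parametrisation really is by $d-1$ free variables.
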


\begin{proof} By induction with respect to $d\ge 3$.
\end{proof}

\begin{remark}\label{Rem-fixmat}
By Lemma~\ref{Lem-fixpoints}, one can assert  that a skew-symmetric
matrix $A$ is unchangeable by application of any arbitrage rule if
and only if it has the following form:
\[
A=\left(\begin{array}{ccccccc}
0& a_{12}&a_{13}&a_{14}&\dots& a_{1,d-1}&a_{1d}\\
\dots& 0&a_{13}-a_{12}&a_{14}-a_{12}&\dots&
a_{1,d-1}-a_{12}&a_{1d}-a_{12}\\
\dots& \dots&0&a_{14}-a_{13}&\dots& a_{1,d-1}-a_{13}&a_{1d}-a_{13}\\
\dots& \dots&\dots&0&\dots& a_{1,d-1}-a_{14}&a_{1d}-a_{14}\\
\dots& \dots&\dots&\dots& \dots&\dots&\dots\\
\dots& \dots&\dots&\dots&\dots& 0&a_{1d}-a_{1,d-1}\\
\dots& \dots&\dots&\dots&\dots& \dots&0
\end{array}\right),
\]
where the subdiagonal elements are defined by skew-symmetry and so
are ignored.\qed
\end{remark}

\begin{lemma}\label{L-Fixbasis} The matrix
\begin{equation}\label{E-defP}
P=\sum_{2\le j\le d}\bse_{1j}\bse_{1j}^{T}+
\sum_{2\le i<j\le d}\bse_{ij}(\bse_{1j}^{T}-\bse_{1i}^{T})
\end{equation}
is a projector on the subspace $\mathbb{F}$. The vectors
\[
\bsf_{i}=P\bse_{1i}=\sum_{1\le j<i}\bse_{ji}-\sum_{i<j\le d}\bse_{ij},\quad
i=2,3,\ldots,d,
\]
where the second sum is assumed to vanish when $i=d$, form a basis
of $\mathbb{F}$.
\end{lemma}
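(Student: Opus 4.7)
The statement bundles two claims: that $P$ is a projector onto $\mathbb{F}$, and that the $\bsf_i$ form a basis of $\mathbb{F}$. My plan is to prove the projector claim by showing separately that (i) $\mathrm{range}(P)\subseteq\mathbb{F}$ and (ii) $P$ acts as the identity on $\mathbb{F}$; these together give $P^2=P$ and $\mathrm{range}(P)=\mathbb{F}$ directly, without any abstract algebra.

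For (i), take an arbitrary vector $\bsx=(x_{pq})$ and compute the coordinates of $\bsy=P\bsx$ from the explicit formula \eqref{E-defP}. Using $\bse_{ij}^{T}\bsx=x_{ij}$, the first sum contributes $y_{1j}=x_{1j}$ for $j\ge 2$, while the second sum contributes $y_{pq}=x_{1q}-x_{1p}$ for $2\le p<q\le d$. This is precisely the fixed-point condition of Lemma~\ref{Lem-fixpoints}, so $P\bsx\in\mathbb{F}$. For (ii), if $\bsx\in\mathbb{F}$ then already $x_{pq}=-x_{1p}+x_{1q}$ for $2\le p<q\le d$ by Lemma~\ref{Lem-fixpoints}, so the coordinates of $P\bsx$ computed above coincide with those of $\bsx$.

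For the basis statement I would first evaluate $P\bse_{1i}$ straight from \eqref{E-defP}: the first sum produces $\bse_{1i}$, and the second sum produces $\sum_{2\le p<i}\bse_{pi}-\sum_{i<q\le d}\bse_{iq}$, giving the announced formula for $\bsf_i$. Since $\bsf_i=P\bse_{1i}\in\mathrm{range}(P)=\mathbb{F}$, all $\bsf_i$ lie in $\mathbb{F}$. Linear independence follows by inspecting the $\bse_{1k}$-coordinates: for $k,i\ge 2$ the $(1,k)$-coordinate of $\bsf_i$ equals $\delta_{ik}$, so $\sum c_i\bsf_i=0$ forces all $c_i=0$. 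Spanning is seen either by a dimension count (Lemma~\ref{Lem-fixpoints} shows $\dim\mathbb{F}=d-1$, matching the number of $\bsf_i$), or constructively by verifying that any $\bsx\in\mathbb{F}$ equals $\sum_{i=2}^{d}x_{1i}\bsf_i$, which reduces to checking the two identities $x_{1j}=\sum_i x_{1i}\delta_{ij}$ and $x_{pq}=x_{1q}-x_{1p}$.

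The only obstacle is careful index bookkeeping: the second sum in \eqref{E-defP} is constrained to $2\le i<j\le d$, while $\bsf_i$ mixes indices $j<i$ (first piece) with $j>i$ (second piece) that change the role of $i$ as a row versus column label. Once one is careful to distinguish the coordinate $\bse_{pq}$ arising in $\bsf_i$ from the summation variable $i$ in $P$, every step is a direct computation with $\bse_{ij}^{T}\bse_{pq}=\delta_{ip}\delta_{jq}$.
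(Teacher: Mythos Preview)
Your proposal is correct and follows essentially the same approach as the paper, which simply states that the lemma ``follows from Lemma~\ref{Lem-fixpoints} and Remark~\ref{Rem-fixmat}.'' You have spelled out in detail the direct verification that the paper leaves implicit: the coordinate computation of $P\bsx$ against the characterization of $\mathbb{F}$ in Lemma~\ref{Lem-fixpoints}, and the dimension count $\dim\mathbb{F}=d-1$.
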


\begin{proof}
Follows from Lemma~\ref{Lem-fixpoints} and Remark~\ref{Rem-fixmat}.
\end{proof}

Let us define in the space $\mathbb{R}^{d(d-1)/2}$ a new system of
coordinates by passing from the vectors $\bsx=(x_{ij})^{T}$ to the vectors
$\bsy=(y_{ij})^{T}$ in accordance with the rule
\begin{equation}\label{E-yx}
y_{ij}=\begin{cases}
x_{ij}&\textrm{for}\quad 1=i<j\le d;\\
x_{ij}+x_{1i}-x_{1j}&\textrm{for}\quad 2\le i<j\le d.
\end{cases}
\end{equation}
Then the inverse change of variables is defined as
\begin{equation}\label{E-xy}
x_{ij}=\begin{cases}
y_{ij}&\textrm{for}\quad 1=i<j\le d;\\
y_{ij}-y_{1i}+y_{1j}&\textrm{for}\quad 2\le i<j\le d.
\end{cases}
\end{equation}
Remark, that in terms of the variables $\bsy$, the subspace $\mathbb{F}$ of
common fixed points of the matrices $\{B_{\omega}\}$ can be characterised as
follows:
\[
\mathbb{F}=\left\{\bsy=(y_{ij}):~ y_{ij}=0~\textrm{for}~2\le i<j\le
d\right\}.
\]

Relations \eqref{E-yx} and \eqref{E-xy} can be rewritten in matrix form
\[
\bsy=Q^{-1}\bsx,\quad \bsx=Q\bsy,
\]
where
\begin{equation}\label{E-Q^{-1}}
Q=I+\sum_{2\le i<j\le d}\bse_{ij}(\bse_{1j}^{T}-\bse_{1i}^{T}),\quad
Q^{-1}=I-\sum_{2\le i<j\le d}\bse_{ij}(\bse_{1j}^{T}-\bse_{1i}^{T}).
\end{equation}
Since the identity matrix $I$ allows the representation
\[
I=\sum_{1\le i<j\le d}\bse_{ij}\bse_{ij}^{T}=
\sum_{2\le j\le d}\bse_{1j}\bse_{1j}^{T}+\sum_{2\le i<j\le
d}\bse_{ij}\bse_{ij}^{T},
\]
then by \eqref{E-defP} we have
\[
Q=\sum_{2\le i<j\le d}\bse_{ij}\bse_{ij}^{T}+P.
\]
Therefore
\[
\bsf_{i}=P\bse_{1i}=Q\bse_{1i},\quad i=2,3,\ldots, d,
\]
and then
\[
\bse_{1i}=Q^{-1}\bsf_{i},\quad i=2,3,\ldots, d.
\]
Because each of the vectors $\bsf_{i}$, $i=2,3,\ldots, d$, is a fixed point
of each of the matrices $B_{\omega}$, i.e.,
\[
\bsf_{i}=B_{\omega}\bsf_{i},\quad i=2,3,\ldots, d, ~\forall~\omega,
\]
then each of the matrices $Q^{-1}B_{\omega}Q$ takes the
block-triangular form:
\begin{equation}\label{E-QBQ}
    D_{\omega}:=Q^{-1}B_{\omega}Q=\left(\begin{array}{cc}
I& F_{\omega}\\
0& G_{\omega}
\end{array}\right),
\end{equation}
where
\[
\bse_{1i}=D_{\omega}\bse_{1i},\quad i=2,3,\ldots, d, ~\forall~\omega,
\]

To describe the structure of the matrices $D_{\omega}$ in more detail, make
use of the representations \eqref{E-defBx} and \eqref{E-Q^{-1}}, and of the
fact that $\bse_{ij}^{T}\bse_{mn}=1$ if and only if $i=m$ and $j=n$, whereas
$\bse_{ij}^{T}\bse_{mn}=0$ in other cases.

\paragraph{Case $1=k<i<j$}
By \eqref{E-defBx}, in this case
$B_{\omega}=I-\bse_{ij}(\bse_{ij}^{T}+\bse_{1i}^{T}-\bse_{1j}^{T})$, from
which
\begin{multline*}
D_{\omega}=Q^{-1}B_{\omega}Q\\
=\Bigl(I-\sum_{2\le m<n\le
d}\bse_{mn}(\bse_{1n}^{T}-\bse_{1m}^{T})\Bigr)
    \Bigl(I-\bse_{ij}(\bse_{ij}^{T}+\bse_{1i}^{T}-\bse_{1j}^{T})\Bigr)
    \Bigl(I+\sum_{2\le m<n\le
    d}\bse_{mn}(\bse_{1n}^{T}-\bse_{1m}^{T})\Bigr)\\
=I-\Bigl(I-\sum_{2\le m<n\le
d}\bse_{mn}(\bse_{1n}^{T}-\bse_{1m}^{T})\Bigr)
    \Bigl(\bse_{ij}(\bse_{ij}^{T}+\bse_{1i}^{T}-\bse_{1j}^{T})\Bigr)
    \Bigl(I+\sum_{2\le m<n\le
    d}\bse_{mn}(\bse_{1n}^{T}-\bse_{1m}^{T})\Bigr)\\
=I -\bse_{ij}(\bse_{ij}^{T}+\bse_{1i}^{T}-\bse_{1j}^{T})
    \Bigl(I+\sum_{2\le m<n\le
    d}\bse_{mn}(\bse_{1n}^{T}-\bse_{1m}^{T})\Bigr)\\
=I -\bse_{ij}(\bse_{ij}^{T}+\bse_{1i}^{T}-\bse_{1j}^{T})-
    \bse_{ij}(\bse_{ij}^{T}+\bse_{1i}^{T}-\bse_{1j}^{T})\sum_{2\le m<n\le
    d}\bse_{mn}(\bse_{1n}^{T}-\bse_{1m}^{T})\\
=I-\bse_{ij}(\bse_{ij}^{T}+\bse_{1i}^{T}-\bse_{1j}^{T})-\bse_{ij}(\bse_{1j}^{T}-\bse_{1i}^{T})=
   I-\bse_{ij}\bse_{ij}^{T}.
\end{multline*}

\paragraph{Case $1<k<i<j$}
By \eqref{E-defBx}, in this case
$B_{\omega}=I-\bse_{ij}(\bse_{ij}^{T}+\bse_{ki}^{T}-\bse_{kj}^{T})$, from
which
\begin{multline*}
D_{\omega}=Q^{-1}B_{\omega}Q\\
=\Bigl(I-\sum_{2\le m<n\le
d}\bse_{mn}(\bse_{1n}^{T}-\bse_{1m}^{T})\Bigr)
    \Bigl(I-\bse_{ij}(\bse_{ij}^{T}+\bse_{ki}^{T}-\bse_{kj}^{T})\Bigr)
    \Bigl(I+\sum_{2\le m<n\le
    d}\bse_{mn}(\bse_{1n}^{T}-\bse_{1m}^{T})\Bigr)\\
=I-\Bigl(I-\sum_{2\le m<n\le
d}\bse_{mn}(\bse_{1n}^{T}-\bse_{1m}^{T})\Bigr)
    \Bigl(\bse_{ij}(\bse_{ij}^{T}+\bse_{ki}^{T}-\bse_{kj}^{T})\Bigr)
    \Bigl(I+\sum_{2\le m<n\le
    d}\bse_{mn}(\bse_{1n}^{T}-\bse_{1m}^{T})\Bigr)\\
=I -\bse_{ij}(\bse_{ij}^{T}+\bse_{ki}^{T}-\bse_{kj}^{T})
    \Bigl(I+\sum_{2\le m<n\le
    d}\bse_{mn}(\bse_{1n}^{T}-\bse_{1m}^{T})\Bigr)\\
=I-\bse_{ij}(\bse_{ij}^{T}+\bse_{ki}^{T}-\bse_{kj}^{T})-\bse_{ij}(\bse_{1j}^{T}-\bse_{1i}^{T})
-\bse_{ij}(\bse_{1i}^{T}-\bse_{1k}^{T})+\bse_{ij}(\bse_{1j}^{T}-\bse_{1k}^{T})\\
=I-\bse_{ij}(\bse_{ij}^{T}+\bse_{ki}^{T}-\bse_{kj}^{T})=B_{\omega}.
\end{multline*}

\paragraph{Case $1= i<k<j$}
By \eqref{E-defBx}, in this case
$B_{\omega}=I-\bse_{1j}(\bse_{1j}^{T}-\bse_{1k}^{T}-\bse_{kj}^{T})$, from
which
\begin{multline*}
D_{\omega}=Q^{-1}B_{\omega}Q\\
=\Bigl(I-\sum_{2\le m<n\le
d}\bse_{mn}(\bse_{1n}^{T}-\bse_{1m}^{T})\Bigr)
    \Bigl(I-\bse_{1j}(\bse_{1j}^{T}-\bse_{1k}^{T}-\bse_{kj}^{T})\Bigr)
    \Bigl(I+\sum_{2\le m<n\le
    d}\bse_{mn}(\bse_{1n}^{T}-\bse_{1m}^{T})\Bigr)\\
=I-\Bigl(I-\sum_{2\le m<n\le
d}\bse_{mn}(\bse_{1n}^{T}-\bse_{1m}^{T})\Bigr)
    \Bigl(\bse_{1j}(\bse_{1j}^{T}-\bse_{1k}^{T}-\bse_{kj}^{T})\Bigr)
    \Bigl(I+\sum_{2\le m<n\le
    d}\bse_{mn}(\bse_{1n}^{T}-\bse_{1m}^{T})\Bigr)\\
=I-\Bigl(I-\sum_{2\le m<n\le
d}\bse_{mn}(\bse_{1n}^{T}-\bse_{1m}^{T})\Bigr)
    \Bigl(\bse_{1j}(\bse_{1j}^{T}-\bse_{1k}^{T}-\bse_{kj}^{T})-
    \bse_{1j}(\bse_{1j}^{T}-\bse_{1k}^{T})\Bigr)\\
=I+\Bigl(I-\sum_{2\le m<n\le
d}\bse_{mn}(\bse_{1n}^{T}-\bse_{1m}^{T})\Bigr)
    \bse_{1j}\bse_{kj}^{T}
=I+\Bigl(\bse_{1j}+\sum_{j<n}\bse_{jn}-\sum_{2\le
m<j}\bse_{mj}\Bigr) \bse_{kj}^{T}.
\end{multline*}

\paragraph{Case $1< i<k<j$}
By \eqref{E-defBx}, in this case
$B_{\omega}=I-\bse_{ij}(\bse_{ij}^{T}-\bse_{ik}^{T}-\bse_{kj}^{T})$, from
which
\begin{multline*}
D_{\omega}=Q^{-1}B_{\omega}Q\\
=\Bigl(I-\sum_{2\le m<n\le
d}\bse_{mn}(\bse_{1n}^{T}-\bse_{1m}^{T})\Bigr)
    \Bigl(I-\bse_{ij}(\bse_{ij}^{T}-\bse_{ik}^{T}-\bse_{kj}^{T})\Bigr)
    \Bigl(I+\sum_{2\le m<n\le
    d}\bse_{mn}(\bse_{1n}^{T}-\bse_{1m}^{T})\Bigr)\\
=I-\Bigl(I-\sum_{2\le m<n\le
d}\bse_{mn}(\bse_{1n}^{T}-\bse_{1m}^{T})\Bigr)
    \Bigl(\bse_{ij}(\bse_{ij}^{T}-\bse_{ik}^{T}-\bse_{kj}^{T})\Bigr)
    \Bigl(I+\sum_{2\le m<n\le
    d}\bse_{mn}(\bse_{1n}^{T}-\bse_{1m}^{T})\Bigr)\\
=I -\bse_{ij}(\bse_{ij}^{T}-\bse_{ik}^{T}-\bse_{kj}^{T})
    \Bigl(I+\sum_{2\le m<n\le
    d}\bse_{mn}(\bse_{1n}^{T}-\bse_{1m}^{T})\Bigr)\\
=I-\bse_{ij}(\bse_{ij}^{T}-\bse_{ik}^{T}-\bse_{kj}^{T})-\bse_{ij}(\bse_{1j}^{T}-\bse_{1i}^{T})
+\bse_{ij}(\bse_{1k}^{T}-\bse_{1i}^{T})+\bse_{ij}(\bse_{1j}^{T}-\bse_{1k}^{T})\\
=I-\bse_{ij}(\bse_{ij}^{T}-\bse_{ik}^{T}-\bse_{kj}^{T})=B_{\omega}.
\end{multline*}

\paragraph{Case $1=i<j<k$}
By \eqref{E-defBx}, in this case
$B_{\omega}=I-\bse_{1j}(\bse_{1j}^{T}-\bse_{1k}^{T}+\bse_{jk}^{T})$, from
which
\begin{multline*}
D_{\omega}=Q^{-1}B_{\omega}Q\\
=\Bigl(I-\sum_{2\le m<n\le
d}\bse_{mn}(\bse_{1n}^{T}-\bse_{1m}^{T})\Bigr)
    \Bigl(I-\bse_{1j}(\bse_{1j}^{T}-\bse_{1k}^{T}+\bse_{jk}^{T})\Bigr)
    \Bigl(I+\sum_{2\le m<n\le
    d}\bse_{mn}(\bse_{1n}^{T}-\bse_{1m}^{T})\Bigr)\\
=I-\Bigl(I-\sum_{2\le m<n\le
d}\bse_{mn}(\bse_{1n}^{T}-\bse_{1m}^{T})\Bigr)
    \Bigl(\bse_{1j}(\bse_{1j}^{T}-\bse_{1k}^{T}+\bse_{jk}^{T})\Bigr)
    \Bigl(I+\sum_{2\le m<n\le
    d}\bse_{mn}(\bse_{1n}^{T}-\bse_{1m}^{T})\Bigr)\\
=I-\Bigl(I-\sum_{2\le m<n\le
d}\bse_{mn}(\bse_{1n}^{T}-\bse_{1m}^{T})\Bigr)
    \Bigl(\bse_{1j}(\bse_{1j}^{T}-\bse_{1k}^{T}+\bse_{jk}^{T})+
    \bse_{1j}(\bse_{1k}^{T}-\bse_{1j}^{T})\Bigr)\\
=I-\Bigl(I-\sum_{2\le m<n\le
d}\bse_{mn}(\bse_{1n}^{T}-\bse_{1m}^{T})\Bigr)
    \bse_{1j}\bse_{jk}^{T}
=I-\Bigl(\bse_{1j}+\sum_{j<n}\bse_{jn}-\sum_{2\le
m<j}\bse_{mj}\Bigr) \bse_{jk}^{T}.
\end{multline*}

\paragraph{Case $1< i<j<k$}
By \eqref{E-defBx}, in this case
$B_{\omega}=I-\bse_{ij}(\bse_{ij}^{T}-\bse_{ik}^{T}+\bse_{jk}^{T})$, from
which
\begin{multline*}
D_{\omega}=Q^{-1}B_{\omega}Q\\
=\Bigl(I-\sum_{2\le m<n\le
d}\bse_{mn}(\bse_{1n}^{T}-\bse_{1m}^{T})\Bigr)
    \Bigl(I-\bse_{ij}(\bse_{ij}^{T}-\bse_{ik}^{T}+\bse_{jk}^{T})\Bigr)
    \Bigl(I+\sum_{2\le m<n\le
    d}\bse_{mn}(\bse_{1n}^{T}-\bse_{1m}^{T})\Bigr)\\
=I-\Bigl(I-\sum_{2\le m<n\le
d}\bse_{mn}(\bse_{1n}^{T}-\bse_{1m}^{T})\Bigr)
    \Bigl(\bse_{ij}(\bse_{ij}^{T}-\bse_{ik}^{T}+\bse_{jk}^{T})\Bigr)
    \Bigl(I+\sum_{2\le m<n\le
    d}\bse_{mn}(\bse_{1n}^{T}-\bse_{1m}^{T})\Bigr)\\
=I -\bse_{ij}(\bse_{ij}^{T}-\bse_{ik}^{T}+\bse_{jk}^{T})
    \Bigl(I+\sum_{2\le m<n\le
    d}\bse_{mn}(\bse_{1n}^{T}-\bse_{1m}^{T})\Bigr)\\
=I-\bse_{ij}(\bse_{ij}^{T}-\bse_{ik}^{T}+\bse_{jk}^{T})-\bse_{ij}(\bse_{1j}^{T}-\bse_{1i}^{T})
+\bse_{ij}(\bse_{1k}^{T}-\bse_{1i}^{T})-\bse_{ij}(\bse_{1k}^{T}-\bse_{1j}^{T})\\
=I-\bse_{ij}(\bse_{ij}^{T}-\bse_{ik}^{T}+\bse_{jk}^{T})=B_{\omega}.
\end{multline*}

It is convenient to unite the obtained relations into a single formula:
\begin{equation}\label{E-Domega}
D_{\omega}=\begin{cases}
I-\bse_{ij}\bse_{ij}^{T}&\textrm{for}\quad 1=k<i<j;\\
I-\bse_{ij}(\bse_{ij}^{T}+\bse_{ki}^{T}-\bse_{kj}^{T})=B_{\omega}&\textrm{for}\quad
1<k<i<j;\\
I+\Bigl(\bse_{1j}+\sum_{j<n}\bse_{jn}-\sum_{2\le
m<j}\bse_{mj}\Bigr) \bse_{kj}^{T}&\textrm{for}\quad 1= i<k<j;\\
I-\bse_{ij}(\bse_{ij}^{T}-\bse_{ik}^{T}-\bse_{kj}^{T})=B_{\omega}&\textrm{for}\quad
1< i<k<j;\\
I-\Bigl(\bse_{1j}+\sum_{j<n}\bse_{jn}-\sum_{2\le
m<j}\bse_{mj}\Bigr) \bse_{jk}^{T}&\textrm{for}\quad 1=i<j<k;\\
I-\bse_{ij}(\bse_{ij}^{T}-\bse_{ik}^{T}+\bse_{jk}^{T})=B_{\omega}&\textrm{for}\quad
1< i<j<k.\\
\end{cases}
\end{equation}

Now, to compute the matrix $F_{\omega}$, it is necessary to retain in the
right-hand part of \eqref{E-Domega} only the summands of the form
$\bse_{mn}\bse_{pq}^{T}$ in which $m=1$ and $n,p,q>1$. Similarly, to compute
the matrix $G_{\omega}$ it is necessary to retain in the right-hand part of
\eqref{E-Domega} only the summands of the form $\bse_{mn}\bse_{pq}^{T}$ in
which $m,n,p,q>1$. In doing so one should keep in mind that $I=\sum_{1\le
m<n\le d}\bse_{mn}\bse_{mn}^{T}$. This results in
\[
F_{\omega}=\begin{cases}
\hfill 0&\textrm{for}\quad 1\le k<i<j,\\
\hfill\bse_{1j}\bse_{kj}^{T}&\textrm{for}\quad 1= i<k<j,\\
\hfill0&\textrm{for}\quad 1< i<k<j,\\
\hfill-\bse_{1j}\bse_{jk}^{T}&\textrm{for}\quad 1=i<j<k,\\
\hfill0&\textrm{for}\quad 1< i<j<k
\end{cases}
\]
and
\begin{equation}\label{E-Gomega}
G_{\omega}=\begin{cases}
I-\bse_{ij}\bse_{ij}^{T}&\textrm{for}\quad 1= k<i<j,\\
I-\bse_{ij}(\bse_{ij}^{T}+\bse_{ki}^{T}-\bse_{kj}^{T})&\textrm{for}\quad 1<
k<i<j,\\
I+\Bigl(\sum_{j<n}\bse_{jn}-\sum_{2\le
m<j}\bse_{mj}\Bigr) \bse_{kj}^{T}&\textrm{for}\quad 1= i<k<j,\\
I-\bse_{ij}(\bse_{ij}^{T}-\bse_{ik}^{T}-\bse_{kj}^{T})&\textrm{for}\quad 1<
i<k<j,\\
I-\Bigl(\sum_{j<n}\bse_{jn}-\sum_{2\le
m<j}\bse_{mj}\Bigr) \bse_{jk}^{T}&\textrm{for}\quad 1=i<j<k,\\
I-\bse_{ij}(\bse_{ij}^{T}-\bse_{ik}^{T}+\bse_{jk}^{T})&\textrm{for}\quad 1<
i<j<k.
\end{cases}
\end{equation}
Naturally, in \eqref{E-Gomega}, in contrast to \eqref{E-Domega}, the
identity matrix $I$ acts in the subspace spanned over the vectors
$\bse_{mn}$, $2\le m<n\le d$, and therefore it is representable as
$I=\sum_{2\le m<n\le d}\bse_{mn}\bse_{mn}^{T}$.

\begin{remark}\label{Rem6}
In formulae \eqref{E-Domega} and \eqref{E-Gomega} the sums corresponding to
an empty set of indices are assumed to vanish. For instance,
$\sum_{j<n}\bse_{jn}=0$ if $j=d$ or $\sum_{2\le m<j}\bse_{mj}=0$ if $j=2$.
\qed
\end{remark}

\begin{remark}\label{Rem7}
The idea of highlighting the matrices $G_{\omega}$ is to lower the dimension
of the studied matrices since, according to \eqref{E-QBQ}, the behaviour of
the products of matrices $G_{\omega}$ essentially determines the properties
of the products of matrices $D_{\omega}$ and $B_{\omega}$. \qed
\end{remark}

\section{Case $d=4$}\label{S-d4}

The detailed analysis of the FX currency market with $4$ trader-arbitragers,
i.e., when $d=4$, was fulfilled in \cite{KozCalPok:ArXiv10} and
\cite{CrossKPP:MECA12}. Recall some related results. In this case the matrix
$A$ is of the form
\[
A=\left(\begin{array}{rrrr}
0_{\hphantom{22}}&a_{12}&a_{13}&a_{14}\\
-a_{12}&0_{\hphantom{22}}&a_{23}&a_{24}\\
-a_{13}&-a_{23}&0_{\hphantom{22}}&a_{34}\\
-a_{14}&-a_{14}&-a_{34}&0_{\hphantom{22}}
\end{array}\right).
\]

In terms of the upper off-diagonal elements, the arbitrage rules can be
written as follows:
\[
\left\{\begin{array}{cccrl}
(123):&a_{12,\textrm{new}}&=&a_{13}-a_{23},&\quad\textrm{other elements
unchanged},\\
(124):&a_{12,\textrm{new}}&=&a_{14}-a_{24},&\quad\textrm{other elements
unchanged},\\
(132):&a_{13,\textrm{new}}&=&a_{12}+a_{23},&\quad\textrm{other elements
unchanged},\\
(134):&a_{13,\textrm{new}}&=&a_{14}-a_{34},&\quad\textrm{other elements
unchanged},\\
(142):&a_{14,\textrm{new}}&=&a_{12}+a_{24},&\quad\textrm{other elements
unchanged},\\
(143):&a_{14,\textrm{new}}&=&a_{13}+a_{34},&\quad\textrm{other elements
unchanged},\\
(231):&a_{23,\textrm{new}}&=&-a_{12}+a_{13},&\quad\textrm{other elements
unchanged},\\
(234):&a_{23,\textrm{new}}&=&a_{24}-a_{34},&\quad\textrm{other elements
unchanged},\\
(241):&a_{24,\textrm{new}}&=&-a_{12}+a_{14},&\quad\textrm{other elements
unchanged},\\
(243):&a_{24,\textrm{new}}&=&a_{23}+a_{34},&\quad\textrm{other elements
unchanged},\\
(341):&a_{34,\textrm{new}}&=&-a_{13}+a_{14},&\quad\textrm{other elements
unchanged},\\
(342):&a_{34,\textrm{new}}&=&-a_{23}+a_{24},&\quad\textrm{other elements
unchanged}.
\end{array}\right.
\]

The subspace of fixed points $\mathbb{F}$ consists of all the vectors $\bsx
=(x_{ij})^{T}$, coordinates of which satisfy
\[
\left\{\begin{alignedat}{4}
x_{12}&=\hphantom{-}x_{13}-x_{23},\quad&
x_{12}&=x_{14}-x_{24},\quad&
x_{13}&=\hphantom{-}x_{12}+x_{23},\quad&
x_{13}&=\hphantom{-}x_{14}-x_{34},\\
x_{14}&=\hphantom{-}x_{12}+x_{24},\quad&
x_{14}&=x_{13}+x_{34},\quad&
x_{23}&=-x_{12}+x_{13},\quad&
x_{23}&=\hphantom{-}x_{24}-x_{34},\\
x_{24}&=-x_{12}+x_{14},\quad&
x_{24}&=x_{23}+x_{34},\quad&
x_{34}&=-x_{13}+x_{14},\quad&
x_{34}&=-x_{23}+x_{24}.
\end{alignedat}\right.
\]
This system has redundant equations. In particular, all the equations
containing the minus sign are redundant. By removing them we obtain:
\[
x_{13}=x_{12}+x_{23},\quad
x_{14}=x_{12}+x_{24},\quad
x_{14}=x_{13}+x_{34},\quad
x_{24}=x_{23}+x_{34}.
\]
Here one of the equations determining $x_{14}$ is also redundant. By
removing it, we obtain the final set of independent equations defining the
subspace $\mathbb{F}$:
\[
x_{13}=x_{12}+x_{23},\quad
x_{14}=x_{13}+x_{34},\quad
x_{24}=x_{23}+x_{34}.
\]
By choosing $x_{12}$, $x_{13}$ and $x_{14}$ as independent variables, the
remaining variables can be defined by the equalities
\[
x_{23}=-x_{12}+x_{13},\quad
x_{24}=-x_{12}+x_{14},\quad x_{34}=-x_{13}+x_{14}.
\]

The form of the matrices $B_{\omega}, D_{\omega}$ and $G_{\omega}$ for this
case is given in \ref{A-d4}.

As was noted in \cite{KozCalPok:ArXiv10} and \cite{CrossKPP:MECA12}, in the
case $d=4$ \textsl{the arbitrage process may be non-convergent for some
periodical sequence of the arbitrage rules}. To support this claim let us
consider the set of all vectors each of which is mapped by some of the
matrices $G_{\omega}$ to zero. Direct calculations show that, up to
non-negative factors, this set consists of the following $12$ vectors:
\[
\bss_{1}=\left(\begin{array}{r}1\\1\\0\end{array}\right),~
\bss_{2}=\left(\begin{array}{r}1\\0\\-1\end{array}\right),~
\bss_{3}=\left(\begin{array}{r}0\\1\\1\end{array}\right),~
\bss_{4}=\left(\begin{array}{r}1\\0\\0\end{array}\right),~
\bss_{5}=\left(\begin{array}{r}0\\1\\0\end{array}\right),~
\bss_{6}=\left(\begin{array}{r}0\\0\\1\end{array}\right),
\]
\[
\bss_{7}=-\bss_{1},\quad \bss_{8}=-\bss_{2},\quad
\bss_{9}=-\bss_{3},\quad \bss_{10}=-\bss_{4},\quad
\bss_{11}=-\bss_{5},\quad \bss_{12}=-\bss_{6},
\]
Moreover, as is easily verified, the set
\[
\mathbb{S}=\textrm{co}\left\{\pm \bss_{1},\pm \bss_{2},\pm \bss_{3},\pm
\bss_{4},
\pm \bss_{5},\pm \bss_{6}\right\}
\]
is mapped by each of the matrices $G_{\omega}$ into itself and is a body,
i.e., contains interior points. Hence, $\mathbb{S}$ is a unit ball of some
norm $\|\cdot\|_{*}$ for which
\begin{equation}\label{E-Gnorm1}
\|G_{\omega}\|_{*}\le 1,\quad\forall\omega.
\end{equation}

So, the set of matrices $\{G_{\omega}\}$ is \textsl{neutrally stable} and,
it seems, there is the possibility that all products of matrices
$G_{\omega}$ are convergent. In Sections~\ref{Ex1} and \ref{Ex2} it is shown
that this is not the case.


\begin{remark}\label{R-linrate}
By \eqref{E-QBQ} the following representation is valid
\[
B_{\omega}=Q\left(\begin{array}{cc}
I& F_{\omega}\\
0& G_{\omega}
\end{array}\right)Q^{-1}.
\]
From here, and from the estimate~\eqref{E-Gnorm1}, it then follows that the
norm of any products of the matrices $B_{\omega}$ may grow no faster than
$cn$ with some constant $c$, where $n$ is the number of factors $B_{\omega}$
in the product. Example~\ref{Ex2} demonstrates that such a rate of growth of
products of the matrices $B_{\omega}$, in the case $d=4$, is really
achievable. \qed
\end{remark}

\subsection{Example}\label{Ex1}

Consider the following sequence of the arbitrage rules
\[
\begin{array}{rrrr}
\omega(1)=(1,4,2),&
\omega(2)=(1,2,3),&
\omega(3)=(3,4,1),&
\omega(4)=(1,4,2),\\
\omega(5)=(1,3,4),&
\omega(6)=(2,4,3),&
\omega(7)=(2,3,1),&
\omega(8)=(3,4,2),\\
\omega(9)=(2,4,1),&
\omega(10)=(1,3,4),&
\omega(11)=(3,4,2),&
\omega(12)=(1,4,3),\\
\omega(13)=(2,3,4),&
\omega(14)=(1,3,2),&
\omega(15)=(1,2,4),&
\omega(16)=(1,4,3),
\end{array}
\]
and extend it by periodicity. Set also $\bsx_{0}=\bss_{1}$, and build
recurrently the sequence
\[
\bsx_{n}=G_{\omega(n)}\bsx_{n-1},\quad n=1,2,\ldots,16.
\]
Then
\begin{alignat*}{3}
\bsx_{0}&=\bss_{1},&\quad
\bsx_{1}&=G_{\omega(1)}\bsx_{0}=G_{(142)}\bsx_{0}=\bss_{2},\\
\bsx_{2}&=G_{\omega(2)}\bsx_{1}=G_{(123)}\bsx_{1}=-\bss_{3},&\quad
\bsx_{3}&=G_{\omega(3)}\bsx_{2}=G_{(341)}\bsx_{2}=\bss_{5},\\
\bsx_{4}&=G_{\omega(4)}\bsx_{3}=G_{(142)}\bsx_{3}=\bss_{6},&\quad
\bsx_{5}&=G_{\omega(5)}\bsx_{4}=G_{(134)}\bsx_{4}=\bss_{4},\\
\bsx_{6}&=G_{\omega(6)}\bsx_{5}=G_{(243)}\bsx_{5}=\bss_{1},&\quad
\bsx_{7}&=G_{\omega(7)}\bsx_{6}=G_{(231)}\bsx_{6}=-\bss_{5},\\
\bsx_{8}&=G_{\omega(8)}\bsx_{7}=G_{(342)}\bsx_{7}=\bss_{3},&\quad
\bsx_{9}&=G_{\omega(9)}\bsx_{8}=G_{(241)}\bsx_{8}=\bss_{6},\\
\bsx_{10}&=G_{\omega(10)}\bsx_{9}=G_{(134)}\bsx_{9}=\bss_{4},&\quad
\bsx_{11}&=G_{\omega(11)}\bsx_{10}=G_{(342)}\bsx_{10}=\bss_{2},\\
\bsx_{12}&=G_{\omega(12)}\bsx_{11}=G_{(143)}\bsx_{11}=\bss_{1},&\quad
\bsx_{13}&=G_{\omega(13)}\bsx_{12}=G_{(234)}\bsx_{12}=\bss_{1},\\
\bsx_{14}&=G_{\omega(14)}\bsx_{13}=G_{(132)}\bsx_{13}=\bss_{3},&\quad
\bsx_{15}&=G_{\omega(15)}\bsx_{14}=G_{(124)}\bsx_{14}=-\bss_{2},\\
\bsx_{16}&=G_{\omega(16)}\bsx_{15}=G_{(143)}\bsx_{15}=-\bss_{1}.
\end{alignat*}

Continuing building, we will obtain a $32$-periodic sequence $\bsx_{n}$. It
is not difficult to see that the sequence of products of the corresponding
matrices $B_{\omega}$:
\[
H_{n}=B_{\omega(n)}H_{n-1},\qquad H_{0}=I,~ n=1,2,\ldots\,,
\]
will be also \textbf{$32$-periodic} and \textbf{non-convergent to any
limit}.

\subsection{Example}\label{Ex2}

Consider the following sequence of the arbitrage rules
\[
\begin{array}{rrrr}
\omega(1)=(1,4,3),&
\omega(2)=(3,4,1),&
\omega(3)=(3,4,2),&
\omega(4)=(1,4,2),\\
\omega(5)=(1,2,4),&
\omega(6)=(2,3,1),&
\omega(7)=(1,3,2),&
\omega(8)=(2,4,3),\\
\omega(9)=(1,3,4),&
\omega(10)=(2,4,1),&
\omega(11)=(1,2,3),&
\omega(12)=(2,3,4),
\end{array}
\]
and extend it by periodicity. Set also $\bsx_{0}=\bss_{1}$, and build
recurrently the sequence
\[
\bsx_{n}=G_{\omega(n)}\bsx_{n-1},\quad n=1,2,\ldots,12.
\]
Then
\begin{alignat*}{2}
\bsx_{0}&=\bss_{1},&\quad
\bsx_{1}&=G_{\omega(1)}\bsx_{0}=G_{(143)}\bsx_{0}=\bss_{1},\\
\bsx_{2}&=G_{\omega(2)}\bsx_{1}=G_{(341)}\bsx_{1}=\bss_{1},&\quad
\bsx_{3}&=G_{\omega(3)}\bsx_{2}=G_{(342)}\bsx_{2}=\bss_{1},\\
\bsx_{4}&=G_{\omega(4)}\bsx_{3}=G_{(142)}\bsx_{3}=\bss_{2},&\quad
\bsx_{5}&=G_{\omega(5)}\bsx_{4}=G_{(124)}\bsx_{4}=\bss_{2},\\
\bsx_{6}&=G_{\omega(6)}\bsx_{5}=G_{(231)}\bsx_{5}=-\bss_{6},&\quad
\bsx_{7}&=G_{\omega(7)}\bsx_{6}=G_{(132)}\bsx_{6}=-\bss_{6},\\
\bsx_{8}&=G_{\omega(8)}\bsx_{7}=G_{(243)}\bsx_{7}=-\bss_{3},&\quad
\bsx_{9}&=G_{\omega(9)}\bsx_{8}=G_{(134)}\bsx_{8}=-\bss_{1},\\
\bsx_{10}&=G_{\omega(10)}\bsx_{9}=G_{(241)}\bsx_{9}=-\bss_{4},&\quad
\bsx_{11}&=G_{\omega(11)}\bsx_{10}=G_{(123)}\bsx_{10}=-\bss_{5},\\
\bsx_{12}&=G_{\omega(12)}\bsx_{11}=G_{(234)}\bsx_{11}=\bss_{1}.
\end{alignat*}

Continuing building, we will obtain a $12$-periodic sequence $\bsx_{n}$. The
sequence of the corresponding products of the matrices $G_{\omega(n)}$ is
\textbf{bounded} by \eqref{E-Gnorm1} whereas the sequence of products of the
matrices $B_{\omega}$:
\[
H_{n}=B_{\omega(n)}H_{n-1},\qquad H_{0}=I,~ n=1,2,\ldots\,,
\]
is \textbf{divergent}, and the norms of matrices $H_{n}$ \textbf{have the
linear rate of growth}, see Remark~\ref{R-linrate}.

\section{Case $d=5$}\label{S-d5}

The full set of the matrices $B_{\omega}, D_{\omega}$ and
$G_{\omega}$ for this case is presented in \ref{A-d5}.
Direct calculations show that the matrix
\[
H=B_{(143)}B_{(231)}B_{(245)}B_{(342)}B_{(451)}B_{(124)}B_{(453)}
\]
is as follows
{\renewcommand{\arraystretch}{0.7}%
\[
H = \left(\begin{array}{@{}R@{}R@{}R@{}R@{}R@{}R@{}R@{}R@{}R@{}R@{}}
0 & 0 & 1 & 0 & 0 & -1 & 0 & 0 & 0 & 0 \\
0 & 1 & 0 & 0 & 0 & 0 & 0 & 0 & 0 & 0 \\
0 & 1 & 0 & 0 & -1 & 1 & 0 & 0 & 0 & 0 \\
0 & 0 & 0 & 1 & 0 & 0 & 0 & 0 & 0 & 0 \\
0 & 1 & -1 & 0 & 0 & 1 & 0 & 0 & 0 & 0 \\
0 & 0 & 1 & -1 & 0 & 0 & 1 & 0 & 0 & 0 \\
0 & 0 & 0 & 0 & 0 & 0 & 1 & 0 & 0 & 0 \\
0 & 0 & 0 & 0 & -1 & 1 & 0 & 0 & 0 & 0 \\
0 & 0 & 0 & 0 & 0 & 0 & 0 & 0 & 1 & 0 \\
0 & 0 & -1 & 1 & 0 & 0 & 0 & 0 & 0 & 0
\end{array}\right),
\]}
and its spectral radius $\rho(H)$ is strictly greater than $1$:
\[
\rho(H)=\frac{1+\sqrt{5}}{2}\simeq 1.618.
\]
To ascertain this it suffices to note that the matrix $H$ has the following
eigenvalues (ordered by decreasing absolute values):
\[
-\frac{1+\sqrt{5}}{2},~1,~1,~1,~1,~1,~\frac{\sqrt{5}-1}{2},~0,~0,~0.
\]

Consider the following sequence of the arbitrage rules
\[
\begin{array}{rrrr}
\omega(1)=(4,5,3),&
\omega(2)=(1,2,4),&
\omega(3)=(4,5,1),&
\omega(4)=(3,4,2),\\
\omega(5)=(2,4,5),&
\omega(6)=(2,3,1),&
\omega(7)=(1,4,3),
\end{array}
\]
and extend it by periodicity. Then, as is known from the linear algebra, for
almost all initial vectors\footnote{To be more specific, for all initial
vectors not belonging to the linear invariant subspace of the matrix $H$
corresponding to the set of eigenvalues distinct from
$-\frac{1+\sqrt{5}}{2}$.} $\bsx_{0}$, the sequence of vectors
\[
\bsx_{n}=B_{\omega(n)}\bsx_{n-1},\quad n=1,2,\ldots\,,
\]
obtained by sequential application of the arbitrage rules $\omega(n)$ will
diverge with the rate $\left(\frac{1+\sqrt{5}}{2}\right)^{n/7}$, i.e.,
\[
c\, \left(\frac{1+\sqrt{5}}{2}\right)^{n/7}\|\bsx_{0}\|\le
\|\bsx_{n}\| \le C\, \left(\frac{1+\sqrt{5}}{2}\right)^{n/7}\|\bsx_{0}\|,
\]
where $c, C$ are some positive constants. Taking into account that $1.071\le
\left(\frac{1+\sqrt{5}}{2}\right)^{1/7}\le 1.072$, the inequalities obtained
may be represented in a more descriptive form:
\begin{equation}\label{E-exprate}
c\, 1.071^{n}\|\bsx_{0}\|\le\|\bsx_{n}\| \le C\, 1.072^{n}\|\bsx_{0}\|.
\end{equation}

\begin{remark}\label{Rem-exp}
Relations \eqref{E-exprate} imply that for $d=5$, in contrast to the case
$d=4$, the norms of the vectors $\{\bsx_{n}\}$ as well as the norms of the
matrix products
\[
H_{n}=B_{\omega(n)}H_{n-1},\qquad H_{0}=I,~ n=1,2,\ldots\,,
\]
\textbf{may grow with the exponential rate}. \qed
\end{remark}

\begin{remark}\label{Rem-inherit}
As was shown in Remark~\ref{Rem-exp}, in the case $d=5$, the norms of the
matrix products may grow with the exponential rate whereas in the case
$d=4$, according to Section~\ref{S-d4}, the norms of the matrix products may
have at most a linear rate of growth. Any sequence of arbitrage rules
involving $4$ traders may be treated as a particular case of a sequence of
arbitrage rules for $5$ traders. Therefore in the case $d=5$ one may observe
also the sequences of matrices whose norms of the products have a linear
rate of growth, as well as sequences of matrices (vectors) varying
periodically, see Section~\ref{S-d4}.

When $d>5$ we face an effect of ``inheritance of instability'' taking place
in the cases $d=4,5$. In this case one may observe all the types of
behaviour of the matrix products described above as well as of the
corresponding state vectors $\bsx_{n}$: periodicity, growth at a linear rate
and growth at an exponential rate. \qed
\end{remark}

\begin{remark}\label{Rem-superexp}
If we return back from the ``artificial'' quantities $a_{ij}$ to the true
exchange rates
\begin{equation}\label{E-rviaa}
r_{ij}=e^{a_{ij}},
\end{equation}
we arrive to a rather unexpected conclusion.

In the case $d=4$ the exchange rates may vary periodically or grow
exponentially which follows from \eqref{E-rviaa} and the results of
Section~\ref{S-d4}.

In the case $d\ge 5$ the exchange rates also may vary periodically or grow
exponentially but also, what is surprising, they \textbf{may grow in
accordance with the double exponential law}. For instance, in the case $d=5$
formula \eqref{E-exprate} implies
\[
e^{\tilde{c}\, 1.071^{n}}\le\|\bsr_{n}\|\le e^{\tilde{C}\, 1.072^{n}},
\]
where $\bsr_{n}$ is the exchange rate vector
\[
\left\{r_{12}, r_{13},\ldots, r_{1d}, r_{23},
r_{24},\ldots\,r_{2d},\ldots,r_{d-1,d}\right\}^{T}
\]
at the moment $n$. \qed
\end{remark}

\begin{remark}\label{Rem-fin}
As follows from the analysis undertaken, after introducing new variables
bringing the matrices $B_{\omega}$ to block-triangular form, one obtains a
set of variables, which is updated only by matrices $G_{\omega}$, and the
behaviour of which is thus determined only by \textbf{its own history}.
These variables may be called, in a sense, ``leading exchange rates
indices.'' For $d=4$, this set is constituted of the variables
\begin{equation}\label{E-newvar4}
a_{12}-a_{13}+a_{23},\quad
a_{12}-a_{14}+a_{24},\quad
a_{13}-a_{14}+a_{34},
\end{equation}
which, in terms of exchange rates, corresponds to the set of variables
\begin{equation}\label{E-newvar4x}
\frac{r_{12}\cdot r_{23}}{r_{13}}=r_{12}\cdot r_{23}\cdot r_{31},\quad
\frac{r_{12}\cdot r_{24}}{r_{14}}=r_{12}\cdot r_{24}\cdot r_{41},\quad
\frac{r_{13}\cdot r_{34}}{r_{14}}=r_{13}\cdot r_{34}\cdot r_{41}.
\end{equation}
For $d=5$, the related set is constituted of the variables
\begin{equation}\label{E-newvar5}
a_{12}-a_{13}+a_{23},\quad
a_{12}-a_{14}+a_{24},\quad
a_{12}-a_{15}+a_{25},\quad
a_{13}-a_{14}+a_{34},\quad
a_{13}-a_{15}+a_{35},\quad
a_{14}-a_{15}+a_{45},
\end{equation}
and to it, in terms of exchange rates, corresponds the set of variables
\begin{equation}\label{E-newvar5x}
r_{12}\cdot r_{23}\cdot r_{31},\quad
r_{12}\cdot r_{24}\cdot r_{41},\quad
r_{12}\cdot r_{25}\cdot r_{51},\quad
r_{13}\cdot r_{34}\cdot r_{41},\quad
r_{13}\cdot r_{35}\cdot r_{51},\quad
r_{14}\cdot r_{45}\cdot r_{51}.
\end{equation}

For $d=4$, the variables \eqref{E-newvar4} and \eqref{E-newvar4x} always
\textbf{vary boundedly} over time, while, for $d\ge5$, the variables
\eqref{E-newvar5} and \eqref{E-newvar5x} may grow unboundedly. At the same
time, the original set of exchange rates $\{r_{ij}\}$ and the corresponding
set of ``artificial'' log-variables $\{a_{ij}\}$ for $d=4$, as well as for
$d\ge5$, \textbf{may grow unboundedly}.
\end{remark}

The intuition underlying the ``leading exchange rates indices'' in
\eqref{E-newvar5x} can be explained as follows. The term $r_{12} r_{23}
r_{31}$ can be interpreted in the following way: the $1$st trader is selling
the currency 1 for the currency 2, then uses the currency 2 to buy currency
3 and then uses the currency 3 to buy his ``home currency'' 1. So, the term
$r_{12} r_{23} r_{31}$ is the ``arbitrage profit factor'' for the $1$st
trader when the trades involve the ``most simple'' cyclic triangular path:
$$
1\rightarrow 2\rightarrow 3\rightarrow 1.
$$

If this factor $r_{12} r_{23} r_{31}$ is equals $1$ then such ``cyclic
triangular trades'' bring no profit to the $1$st trader, while when $r_{12}
r_{23} r_{31}\neq 1$ then such ``cyclic triangular trades'' bring profit (or
loss) to the $1$st trader. Hence the condition
\[
r_{12} r_{23} r_{31}=1
\]
is the ``no-arbitrage'' condition for trading by the cyclic path
$1\rightarrow 2\rightarrow 3\rightarrow 1$, \textbf{for the $1$st trader}.
This ``unit circle'' no-arbitrage condition was pointed out in \citet[first
published in 1838]{cournot29}.

Then \eqref{E-newvar5x} is simply the full set of factors whose simultaneous
equality to $1$ represents the no-arbitrage conditions (NACs) for the whole
set of currencies in the $d=5$ case:
\begin{equation}\label{No-Arb-Cond}
r_{12} r_{23} r_{31}=1,\quad
r_{12} r_{24} r_{41}=1,\quad
r_{12} r_{25} r_{51}=1,\quad
r_{13} r_{34} r_{41}=1,\quad
r_{13} r_{35} r_{51}=1,\quad
r_{14} r_{45} r_{51}=1.
\end{equation}

It appears that different currencies are represented in \eqref{No-Arb-Cond}
non-symmetrically. But in fact there are different NACs which are equivalent
to each other.

We may introduce a measure, say,
  \[
  Q=|r_{12} r_{23} r_{31}-1|+|r_{12} r_{24} r_{41}-1|+
|r_{12} r_{25} r_{51}-1|+
|r_{13} r_{34} r_{41}-1|+
|r_{13} r_{35} r_{51}-1|+
|r_{14} r_{45} r_{51}-1|,
  \]
which can be interpreted as an ``accumulative'' NAC. The meaning is that the
equality $Q=0$ involves the absence of profitable arbitrage opportunities in
the whole FX market system, while the quantity $Q\neq0$ characterises the
extent of profitable arbitrage opportunities in the FX market system.

\section{Concluding Remarks}\label{S-ConcRem}

The EMH implies that arbitrage operations in markets such as those for FX
should be self-extinguishing. This simplistic view of arbitrage processes
may reflect a failure to consider the complex chains of arbitrage possible
in financial markets, such as those for FX, that have segmented trading
platforms and fragmented information. In \cite{KozCalPok:ArXiv10} and
\cite{CrossKPP:MECA12} it was shown, applying combinatorial analysis and
asynchronous systems theory, that arbitrage sequences in a $4$-currency
world tend to display periodicity or exponential growth rather than being
self-extinguishing. The present paper has used a log-linear reformulation of
the analysis to show that in a $(d\ge5)$-currency world, as well as
inheriting the periodicity or exponential growth found in the $4$-currency
case, arbitrage sequences can be augmented by a double-exponential process.
Thus there appears to be a cumulative ``inheritance of instability''.

The results may help to explain why extensive arbitrage trading is endemic
in FX markets. The fact that the number of currencies makes a difference is
interesting: periodicity or exponential behaviour arise when you move from
$3$ to $4$ currencies; double exponential behaviour as well, when you move
beyond $4$ currencies. The \textsl{Financial Times} provides daily quotes
for $52$ currencies against the US dollar, euro and pound sterling. But some
currencies are more important than others. BIS data indicate that trades
involving the US dollar accounted for $85/200$ of all FX trades in 2010,
followed by the euro $(39/200)$, Japanese yen $(19/200)$ and sterling
$(13/200)$, so trades involving these four key currencies accounted for
$78\%$ of global FX turnover \citep[p.~12]{BIS}. In terms of currency pairs
the leading trades are the dollar--euro, which accounted for $28\%$ of
global turnover, followed by the dollar--yen $(14\%)$, dollar--sterling
$(9\%)$, euro--yen $(3\%)$, euro--sterling $(3\%)$ and sterling--yen $(3\%)$
--- see \citet[p.~15]{BIS}. In the terminology of Section~\ref{S-d5} of this
paper, these four currencies may be called ``leading currencies.'' In a
world containing just these four leading currencies, arbitrage operations
can display periodicity, thus being bounded, or ``mild'' exponential growth.
Once we move to account for triangular arbitrage outside these four leading
currencies the arbitrage sequences may be unbounded, displaying
``explosive'' double exponential behaviour. In practice central banks often
intervene to attempt to prevent exchange rates exceeding certain bounds. The
history of financial crises, however, is replete with examples of central
bank interventions in FX markets being unsuccessful, the central banks being
eventually obliged to abandon their exchange rate pegs or target ranges.

The ``leading exchange rate indices,'' discussed at the end of
Section~\ref{S-d5} of this paper, indicate arbitrage profit opportunities
expressed in an FX trader's ``home currency'' unit of account. The August 15
1971 abandonment of the convertibility of the US dollar into gold at a fixed
price of 35\$ per ounce, and the breakdown of the Bretton Woods system
whereby IMF member currencies were pegged to the US dollar, left the
international monetary system without a well-defined numeraire or unit of
account. To fill this gap, currency-specific effective exchange rate (EER)
indices are calculated by national central banks and finance ministries, and
by international organisations such as the IMF and BIS. An EER is a
geometrically weighted average of the bilateral exchange rates for a
particular currency. So, for example, the BIS currently publishes broad EERs
covering $61$ currencies, and narrow EERs covering $27$ currencies, using
2008--2010 international trade flow data for weights
\citep{KlauFung06,BIS12}. The various EERs use the ``home currency'' unit of
account to provide exchange rate indices designed for a world in which there
is no gold or single reserve currency anchor.

Special Drawing Rights (SDRs) were created by the IMF in 1969 to supplement
the gold and foreign exchange reserves available to member central banks.
The SDR was initially defined as equivalent to $0.888671$ grams of fine
gold, which could be exchanged for $1$ US dollar in 1969, but is now defined
as a weighted value of the US dollar, euro, pound sterling and Japanese yen,
with the weights being changed every five years, expressed in US dollars
\citep{IMF12}. A European Currency Unit (ECU), conceived by European
Economic Community (EEC, later the EU) in 1979, served as an EU unit of
account, being a weighted value of EU member currencies, until the euro was
launched in 1999.

The governor of the People's Bank of China recently advocated that the SDR
be adopted as a global, tradeable reserve currency \citep{ZX09}. The SDR is
a dollar--euro--sterling--yen currency index, expressed in US dollars, so
this is akin to the ``leading exchange rate'' index faced by a dollar FX
trader-arbitrageur in the $d=4$ case considered earlier in the present
paper. Accordingly, the periodicity and exponential growth features of
arbitrage sequences involving these four currencies would apply. This
suggests that central banks holding the SDR as a tradeable reserve currency
would be faced with endemic, rather than self-extinguishing, opportunities
for arbitrage profits on their FX reserve portfolios. The proposal that
``the basket of currencies forming the basis for SDR valuation should be
expanded to include currencies of all major economics'' \citep[p.~3]{ZX09}
would exacerbate  the arbitrage problem, introducing the ``inheritance of
instability'' arising in the $d\ge 5$ case considered in the present paper.

\newpage\appendix \hfill{\Large \textbf{\appendixname}}

\section{Matrices $\boldsymbol{B_{(ijk)}, D_{(ijk)}}$
and $\boldsymbol{G_{(ijk)}}$ in the Case
$\boldsymbol{d=4}$}\label{A-d4}

The matrices $B_{\omega}$:
{%
\renewcommand{\arraystretch}{0.7}
\begin{alignat*}{2}
B_{(123)}&=\left(\begin{array}{@{}R@{}R@{}R@{}R@{}R@{}R@{}}
      0&1&0&-1&0&0\\
      0&1&0&0&0&0\\
      0&0&1&0&0&0\\
      0&0&0&1&0&0\\
      0&0&0&0&1&0\\
      0&0&0&0&0&1
\end{array}\right),&\quad
B_{(124)}&=\left(\begin{array}{@{}R@{}R@{}R@{}R@{}R@{}R@{}}
      0&0&1&0&-1&0\\
      0&1&0&0&0&0\\
      0&0&1&0&0&0\\
      0&0&0&1&0&0\\
      0&0&0&0&1&0\\
      0&0&0&0&0&1
\end{array}\right),\\
B_{(132)}&=\left(\begin{array}{@{}R@{}R@{}R@{}R@{}R@{}R@{}}
      \vphantom{-}1&0&0&0&0&0\\
      1&0&0&1&0&0\\
      0&0&1&0&0&0\\
      0&0&0&1&0&0\\
      0&0&0&0&1&0\\
      0&0&0&0&0&1
\end{array}\right),&\quad
B_{(134)}&=\left(\begin{array}{@{}R@{}R@{}R@{}R@{}R@{}R@{}}
      1&0&0&0&0&0\\
      0&0&1&0&0&-1\\
      0&0&1&0&0&0\\
      0&0&0&1&0&0\\
      0&0&0&0&1&0\\
      0&0&0&0&0&1
\end{array}\right),\\
B_{(142)}&=\left(\begin{array}{@{}R@{}R@{}R@{}R@{}R@{}R@{}}
      \vphantom{-}1&0&0&0&0&0\\
      0&1&0&0&0&0\\
      1&0&0&0&1&0\\
      0&0&0&1&0&0\\
      0&0&0&0&1&0\\
      0&0&0&0&0&1
\end{array}\right),&\quad
B_{(143)}&=\left(\begin{array}{@{}R@{}R@{}R@{}R@{}R@{}R@{}}
      \vphantom{-}1&0&0&0&0&0\\
      0&1&0&0&0&0\\
      0&1&0&0&0&1\\
      0&0&0&1&0&0\\
      0&0&0&0&1&0\\
      0&0&0&0&0&1
\end{array}\right),\\
B_{(231)}&=\left(\begin{array}{@{}R@{}R@{}R@{}R@{}R@{}R@{}}
      1&0&0&0&0&0\\
      0&1&0&0&0&0\\
      0&0&1&0&0&0\\
      -1&1&0&0&0&0\\
      0&0&0&0&1&0\\
      0&0&0&0&0&1
\end{array}\right),&\quad
B_{(234)}&=\left(\begin{array}{@{}R@{}R@{}R@{}R@{}R@{}R@{}}
      1&0&0&0&0&0\\
      0&1&0&0&0&0\\
      0&0&1&0&0&0\\
      0&0&0&0&1&-1\\
      0&0&0&0&1&0\\
      0&0&0&0&0&1
\end{array}\right),\\
B_{(241)}&=\left(\begin{array}{@{}R@{}R@{}R@{}R@{}R@{}R@{}}
      1&0&0&0&0&0\\
      0&1&0&0&0&0\\
      0&0&1&0&0&0\\
      0&0&0&1&0&0\\
      -1&0&1&0&0&0\\
      0&0&0&0&0&1
\end{array}\right),&\quad
B_{(243)}&=\left(\begin{array}{@{}R@{}R@{}R@{}R@{}R@{}R@{}}
      \vphantom{-}1&0&0&0&0&0\\
      0&1&0&0&0&0\\
      0&0&1&0&0&0\\
      0&0&0&1&0&0\\
      0&0&0&1&0&1\\
      0&0&0&0&0&1
\end{array}\right),\\
B_{(341)}&=\left(\begin{array}{@{}R@{}R@{}R@{}R@{}R@{}R@{}}
      1&0&0&0&0&0\\
      0&1&0&0&0&0\\
      0&0&1&0&0&0\\
      0&0&0&1&0&0\\
      0&0&0&0&1&0\\
      0&-1&1&0&0&0
\end{array}\right),&\quad
B_{(342)}&=\left(\begin{array}{@{}R@{}R@{}R@{}R@{}R@{}R@{}}
      1&0&0&0&0&0\\
      0&1&0&0&0&0\\
      0&0&1&0&0&0\\
      0&0&0&1&0&0\\
      0&0&0&0&1&0\\
      0&0&0&-1&1&0
\end{array}\right).
\end{alignat*}
}%

The matrices $Q$, $Q^{-1}$ and $D_{\omega}=Q^{-1}B_{\omega}Q$ take
the form:
{\renewcommand{\arraystretch}{0.7}%
\begin{alignat*}{2}
Q&=\left(\begin{array}{@{}R@{}R@{}R|@{~}R@{}R@{}R@{}}
1&0&0&0&0&0\\
0&1&0&0&0&0\\
0&0&1&0&0&0\\
\hline
-1&1&0&1&0&0\\
-1&0&1&0&1&0\\
0&-1&1&0&0&1
\end{array}\right),&\quad
Q^{-1}&=\left(\begin{array}{@{}R@{}R@{}R|@{~}R@{}R@{}R@{}}
1&0&0&0&0&0\\
0&1&0&0&0&0\\
0&0&1&0&0&0\\
\hline
1&-1&0&1&0&0\\
1&0&-1&0&1&0\\
0&1&-1&0&0&1
\end{array}\right),\\
D_{(123)}&=\left(\begin{array}{@{}R@{}R@{}R|@{~}R@{}R@{}R@{}}
1&0&0&-1&0&0\\
0&1&0&0&0&0\\
0&0&1&0&0&0\\
\hline
0&0&0&0&0&0\\
0&0&0&-1&1&0\\
0&0&0&0&0&1
\end{array}\right),&\quad
D_{(124)}&=\left(\begin{array}{@{}R@{}R@{}R|@{~}R@{}R@{}R@{}}
1&0&0&0&-1&0\\
0&1&0&0&0&0\\
0&0&1&0&0&0\\
\hline
0&0&0&1&-1&0\\
0&0&0&0&0&0\\
0&0&0&0&0&1
\end{array}\right),\\
D_{(132)}&=\left(\begin{array}{@{}R@{}R@{}R|@{~}R@{}R@{}R@{}}
1&0&0&0&0&0\\
0&1&0&1&0&0\\
0&0&1&0&0&0\\
\hline
0&0&0&0&0&0\\
0&0&0&0&1&0\\
0&0&0&1&0&1
\end{array}\right),&\quad
D_{(134)}&=\left(\begin{array}{@{}R@{}R@{}R|@{~}R@{}R@{}R@{}}
1&0&0&0&0&0\\
0&1&0&0&0&-1\\
0&0&1&0&0&0\\
\hline
0&0&0&1&0&1\\
0&0&0&0&1&0\\
0&0&0&0&0&0
\end{array}\right),\\
D_{(142)}&=\left(\begin{array}{@{}R@{}R@{}R|@{~}R@{}R@{}R@{}}
1&0&0&0&0&0\\
0&1&0&0&0&0\\
0&0&1&0&1&0\\
\hline
0&0&0&1&0&0\\
0&0&0&0&0&0\\
0&0&0&0&-1&1
\end{array}\right),&\quad
D_{(143)}&=\left(\begin{array}{@{}R@{}R@{}R|@{~}R@{}R@{}R@{}}
1&0&0&0&0&0\\
0&1&0&0&0&0\\
0&0&1&0&0&1\\
\hline
0&0&0&1&0&0\\
0&0&0&0&1&-1\\
0&0&0&0&0&0
\end{array}\right),\\
D_{(231)}&=\left(\begin{array}{@{}R@{}R@{}R|@{~}R@{}R@{}R@{}}
\vphantom{-}1&0&0&0&0&0\\
0&1&0&0&0&0\\
0&0&1&0&0&0\\
\hline
0&0&0&0&0&0\\
0&0&0&0&1&0\\
0&0&0&0&0&1
\end{array}\right),&\quad
D_{(234)}&=\left(\begin{array}{@{}R@{}R@{}R|@{~}R@{}R@{}R@{}}
1&0&0&0&0&0\\
0&1&0&0&0&0\\
0&0&1&0&0&0\\
\hline
0&0&0&0&1&-1\\
0&0&0&0&1&0\\
0&0&0&0&0&1
\end{array}\right),\\
D_{(241)}&=\left(\begin{array}{@{}R@{}R@{}R|@{~}R@{}R@{}R@{}}
\vphantom{-}1&0&0&0&0&0\\
0&1&0&0&0&0\\
0&0&1&0&0&0\\
\hline
0&0&0&1&0&0\\
0&0&0&0&0&0\\
0&0&0&0&0&1
\end{array}\right),&\quad
D_{(243)}&=\left(\begin{array}{@{}R@{}R@{}R|@{~}R@{}R@{}R@{}}
\vphantom{-}1&0&0&0&0&0\\
0&1&0&0&0&0\\
0&0&1&0&0&0\\
\hline
0&0&0&1&0&0\\
0&0&0&1&0&1\\
0&0&0&0&0&1
\end{array}\right),\\
D_{(341)}&=\left(\begin{array}{@{}R@{}R@{}R|@{~}R@{}R@{}R@{}}
\vphantom{-}1&0&0&0&0&0\\
0&1&0&0&0&0\\
0&0&1&0&0&0\\
\hline
0&0&0&1&0&0\\
0&0&0&0&1&0\\
0&0&0&0&0&0
\end{array}\right),&\quad
D_{(342)}&=\left(\begin{array}{@{}R@{}R@{}R|@{~}R@{}R@{}R@{}}
1&0&0&0&0&0\\
0&1&0&0&0&0\\
0&0&1&0&0&0\\
\hline
0&0&0&1&0&0\\
0&0&0&0&1&0\\
0&0&0&-1&1&0
\end{array}\right).
\end{alignat*}
}%

The matrices $G_{\omega}$:
{\renewcommand{\arraystretch}{0.7}%
\begin{alignat*}{3}
G_{(123)}&=\left(\begin{array}{@{}R@{}R@{}R@{}}
0&0&0\\
-1&1&0\\
0&0&1
\end{array}\right),&\quad
G_{(124)}&=\left(\begin{array}{@{}R@{}R@{}R@{}}
1&-1&0\\
0&0&0\\
0&0&1
\end{array}\right),&\quad
G_{(132)}&=\left(\begin{array}{@{}R@{}R@{}R@{}}
0&0&0\\
0&1&0\\
1&0&1
\end{array}\right),\\
G_{(134)}&=\left(\begin{array}{@{}R@{}R@{}R@{}}
1&0&1\\
0&1&0\\
0&0&0
\end{array}\right),&\quad
G_{(142)}&=\left(\begin{array}{@{}R@{}R@{}R@{}}
1&0&0\\
0&0&0\\
0&-1&1
\end{array}\right),&\quad
G_{(143)}&=\left(\begin{array}{@{}R@{}R@{}R@{}}
1&0&0\\
0&1&-1\\
0&0&0
\end{array}\right),\\
G_{(231)}&=\left(\begin{array}{@{}R@{}R@{}R@{}}
0&0&0\\
0&1&0\\
0&0&1
\end{array}\right),&\quad
G_{(234)}&=\left(\begin{array}{@{}R@{}R@{}R@{}}
0&1&-1\\
0&1&0\\
0&0&1
\end{array}\right),&\quad
G_{(241)}&=\left(\begin{array}{@{}R@{}R@{}R@{}}
1&0&0\\
0&0&0\\
0&0&1
\end{array}\right),\\
G_{(243)}&=\left(\begin{array}{@{}R@{}R@{}R@{}}
1&0&0\\
1&0&1\\
0&0&1
\end{array}\right),&\quad
G_{(341)}&=\left(\begin{array}{@{}R@{}R@{}R@{}}
1&0&0\\
0&1&0\\
0&0&0
\end{array}\right),&\quad
G_{(342)}&=\left(\begin{array}{@{}R@{}R@{}R@{}}
1&0&0\\
0&1&0\\
-1&1&0
\end{array}\right).
\end{alignat*}
}%

\section{Matrices $\boldsymbol{B_{(ijk)}, D_{(ijk)}}$
and $\boldsymbol{G_{(ijk)}}$ in the case
$\boldsymbol{d=5}$}\label{A-d5}

The matrices $B_{\omega}$:
{\renewcommand{\arraystretch}{0.7}%
\begin{alignat*}{2}
B_{(123)} &=
\left(\begin{array}{@{}R@{}R@{}R@{}R@{}R@{}R@{}R@{}R@{}R@{}R@{}}
0 & 1 & 0 & 0 & -1 & 0 & 0 & 0 & 0 & 0 \\
0 & 1 & 0 & 0 & 0 & 0 & 0 & 0 & 0 & 0 \\
0 & 0 & 1 & 0 & 0 & 0 & 0 & 0 & 0 & 0 \\
0 & 0 & 0 & 1 & 0 & 0 & 0 & 0 & 0 & 0 \\
0 & 0 & 0 & 0 & 1 & 0 & 0 & 0 & 0 & 0 \\
0 & 0 & 0 & 0 & 0 & 1 & 0 & 0 & 0 & 0 \\
0 & 0 & 0 & 0 & 0 & 0 & 1 & 0 & 0 & 0 \\
0 & 0 & 0 & 0 & 0 & 0 & 0 & 1 & 0 & 0 \\
0 & 0 & 0 & 0 & 0 & 0 & 0 & 0 & 1 & 0 \\
0 & 0 & 0 & 0 & 0 & 0 & 0 & 0 & 0 & 1
\end{array}\right),&\quad
B_{(124)} &=
\left(\begin{array}{@{}R@{}R@{}R@{}R@{}R@{}R@{}R@{}R@{}R@{}R@{}}
0 & 0 & 1 & 0 & 0 & -1 & 0 & 0 & 0 & 0 \\
0 & 1 & 0 & 0 & 0 & 0 & 0 & 0 & 0 & 0 \\
0 & 0 & 1 & 0 & 0 & 0 & 0 & 0 & 0 & 0 \\
0 & 0 & 0 & 1 & 0 & 0 & 0 & 0 & 0 & 0 \\
0 & 0 & 0 & 0 & 1 & 0 & 0 & 0 & 0 & 0 \\
0 & 0 & 0 & 0 & 0 & 1 & 0 & 0 & 0 & 0 \\
0 & 0 & 0 & 0 & 0 & 0 & 1 & 0 & 0 & 0 \\
0 & 0 & 0 & 0 & 0 & 0 & 0 & 1 & 0 & 0 \\
0 & 0 & 0 & 0 & 0 & 0 & 0 & 0 & 1 & 0 \\
0 & 0 & 0 & 0 & 0 & 0 & 0 & 0 & 0 & 1
\end{array}\right),\\
B_{(125)} &=
\left(\begin{array}{@{}R@{}R@{}R@{}R@{}R@{}R@{}R@{}R@{}R@{}R@{}}
0 & 0 & 0 & 1 & 0 & 0 & -1 & 0 & 0 & 0 \\
0 & 1 & 0 & 0 & 0 & 0 & 0 & 0 & 0 & 0 \\
0 & 0 & 1 & 0 & 0 & 0 & 0 & 0 & 0 & 0 \\
0 & 0 & 0 & 1 & 0 & 0 & 0 & 0 & 0 & 0 \\
0 & 0 & 0 & 0 & 1 & 0 & 0 & 0 & 0 & 0 \\
0 & 0 & 0 & 0 & 0 & 1 & 0 & 0 & 0 & 0 \\
0 & 0 & 0 & 0 & 0 & 0 & 1 & 0 & 0 & 0 \\
0 & 0 & 0 & 0 & 0 & 0 & 0 & 1 & 0 & 0 \\
0 & 0 & 0 & 0 & 0 & 0 & 0 & 0 & 1 & 0 \\
0 & 0 & 0 & 0 & 0 & 0 & 0 & 0 & 0 & 1
\end{array}\right),&\quad
B_{(132)} &=
\left(\begin{array}{@{}R@{}R@{}R@{}R@{}R@{}R@{}R@{}R@{}R@{}R@{}}
1 & 0 & 0 & 0 & 0 & 0 & 0 & 0 & 0 & 0 \\
1 & 0 & 0 & 0 & 1 & 0 & 0 & 0 & 0 & 0 \\
0 & 0 & 1 & 0 & 0 & 0 & 0 & 0 & 0 & 0 \\
0 & 0 & 0 & 1 & 0 & 0 & 0 & 0 & 0 & 0 \\
0 & 0 & 0 & 0 & 1 & 0 & 0 & 0 & 0 & 0 \\
0 & 0 & 0 & 0 & 0 & 1 & 0 & 0 & 0 & 0 \\
0 & 0 & 0 & 0 & 0 & 0 & 1 & 0 & 0 & 0 \\
0 & 0 & 0 & 0 & 0 & 0 & 0 & 1 & 0 & 0 \\
0 & 0 & 0 & 0 & 0 & 0 & 0 & 0 & 1 & 0 \\
0 & 0 & 0 & 0 & 0 & 0 & 0 & 0 & 0 & 1
\end{array}\right),\\
B_{(134)} &=
\left(\begin{array}{@{}R@{}R@{}R@{}R@{}R@{}R@{}R@{}R@{}R@{}R@{}}
1 & 0 & 0 & 0 & 0 & 0 & 0 & 0 & 0 & 0 \\
0 & 0 & 1 & 0 & 0 & 0 & 0 & -1 & 0 & 0 \\
0 & 0 & 1 & 0 & 0 & 0 & 0 & 0 & 0 & 0 \\
0 & 0 & 0 & 1 & 0 & 0 & 0 & 0 & 0 & 0 \\
0 & 0 & 0 & 0 & 1 & 0 & 0 & 0 & 0 & 0 \\
0 & 0 & 0 & 0 & 0 & 1 & 0 & 0 & 0 & 0 \\
0 & 0 & 0 & 0 & 0 & 0 & 1 & 0 & 0 & 0 \\
0 & 0 & 0 & 0 & 0 & 0 & 0 & 1 & 0 & 0 \\
0 & 0 & 0 & 0 & 0 & 0 & 0 & 0 & 1 & 0 \\
0 & 0 & 0 & 0 & 0 & 0 & 0 & 0 & 0 & 1
\end{array}\right),&\quad
B_{(135)} &=
\left(\begin{array}{@{}R@{}R@{}R@{}R@{}R@{}R@{}R@{}R@{}R@{}R@{}}
1 & 0 & 0 & 0 & 0 & 0 & 0 & 0 & 0 & 0 \\
0 & 0 & 0 & 1 & 0 & 0 & 0 & 0 & -1 & 0 \\
0 & 0 & 1 & 0 & 0 & 0 & 0 & 0 & 0 & 0 \\
0 & 0 & 0 & 1 & 0 & 0 & 0 & 0 & 0 & 0 \\
0 & 0 & 0 & 0 & 1 & 0 & 0 & 0 & 0 & 0 \\
0 & 0 & 0 & 0 & 0 & 1 & 0 & 0 & 0 & 0 \\
0 & 0 & 0 & 0 & 0 & 0 & 1 & 0 & 0 & 0 \\
0 & 0 & 0 & 0 & 0 & 0 & 0 & 1 & 0 & 0 \\
0 & 0 & 0 & 0 & 0 & 0 & 0 & 0 & 1 & 0 \\
0 & 0 & 0 & 0 & 0 & 0 & 0 & 0 & 0 & 1
\end{array}\right),\\
B_{(142)} &=
\left(\begin{array}{@{}R@{}R@{}R@{}R@{}R@{}R@{}R@{}R@{}R@{}R@{}}
1 & 0 & 0 & 0 & 0 & 0 & 0 & 0 & 0 & 0 \\
0 & 1 & 0 & 0 & 0 & 0 & 0 & 0 & 0 & 0 \\
1 & 0 & 0 & 0 & 0 & 1 & 0 & 0 & 0 & 0 \\
0 & 0 & 0 & 1 & 0 & 0 & 0 & 0 & 0 & 0 \\
0 & 0 & 0 & 0 & 1 & 0 & 0 & 0 & 0 & 0 \\
0 & 0 & 0 & 0 & 0 & 1 & 0 & 0 & 0 & 0 \\
0 & 0 & 0 & 0 & 0 & 0 & 1 & 0 & 0 & 0 \\
0 & 0 & 0 & 0 & 0 & 0 & 0 & 1 & 0 & 0 \\
0 & 0 & 0 & 0 & 0 & 0 & 0 & 0 & 1 & 0 \\
0 & 0 & 0 & 0 & 0 & 0 & 0 & 0 & 0 & 1
\end{array}\right),&\quad
B_{(143)} &=
\left(\begin{array}{@{}R@{}R@{}R@{}R@{}R@{}R@{}R@{}R@{}R@{}R@{}}
1 & 0 & 0 & 0 & 0 & 0 & 0 & 0 & 0 & 0 \\
0 & 1 & 0 & 0 & 0 & 0 & 0 & 0 & 0 & 0 \\
0 & 1 & 0 & 0 & 0 & 0 & 0 & 1 & 0 & 0 \\
0 & 0 & 0 & 1 & 0 & 0 & 0 & 0 & 0 & 0 \\
0 & 0 & 0 & 0 & 1 & 0 & 0 & 0 & 0 & 0 \\
0 & 0 & 0 & 0 & 0 & 1 & 0 & 0 & 0 & 0 \\
0 & 0 & 0 & 0 & 0 & 0 & 1 & 0 & 0 & 0 \\
0 & 0 & 0 & 0 & 0 & 0 & 0 & 1 & 0 & 0 \\
0 & 0 & 0 & 0 & 0 & 0 & 0 & 0 & 1 & 0 \\
0 & 0 & 0 & 0 & 0 & 0 & 0 & 0 & 0 & 1
\end{array}\right),\\
B_{(145)} &=
\left(\begin{array}{@{}R@{}R@{}R@{}R@{}R@{}R@{}R@{}R@{}R@{}R@{}}
1 & 0 & 0 & 0 & 0 & 0 & 0 & 0 & 0 & 0 \\
0 & 1 & 0 & 0 & 0 & 0 & 0 & 0 & 0 & 0 \\
0 & 0 & 0 & 1 & 0 & 0 & 0 & 0 & 0 & -1 \\
0 & 0 & 0 & 1 & 0 & 0 & 0 & 0 & 0 & 0 \\
0 & 0 & 0 & 0 & 1 & 0 & 0 & 0 & 0 & 0 \\
0 & 0 & 0 & 0 & 0 & 1 & 0 & 0 & 0 & 0 \\
0 & 0 & 0 & 0 & 0 & 0 & 1 & 0 & 0 & 0 \\
0 & 0 & 0 & 0 & 0 & 0 & 0 & 1 & 0 & 0 \\
0 & 0 & 0 & 0 & 0 & 0 & 0 & 0 & 1 & 0 \\
0 & 0 & 0 & 0 & 0 & 0 & 0 & 0 & 0 & 1
\end{array}\right),&\quad
B_{(152)} &=
\left(\begin{array}{@{}R@{}R@{}R@{}R@{}R@{}R@{}R@{}R@{}R@{}R@{}}
1 & 0 & 0 & 0 & 0 & 0 & 0 & 0 & 0 & 0 \\
0 & 1 & 0 & 0 & 0 & 0 & 0 & 0 & 0 & 0 \\
0 & 0 & 1 & 0 & 0 & 0 & 0 & 0 & 0 & 0 \\
1 & 0 & 0 & 0 & 0 & 0 & 1 & 0 & 0 & 0 \\
0 & 0 & 0 & 0 & 1 & 0 & 0 & 0 & 0 & 0 \\
0 & 0 & 0 & 0 & 0 & 1 & 0 & 0 & 0 & 0 \\
0 & 0 & 0 & 0 & 0 & 0 & 1 & 0 & 0 & 0 \\
0 & 0 & 0 & 0 & 0 & 0 & 0 & 1 & 0 & 0 \\
0 & 0 & 0 & 0 & 0 & 0 & 0 & 0 & 1 & 0 \\
0 & 0 & 0 & 0 & 0 & 0 & 0 & 0 & 0 & 1
\end{array}\right),\\
B_{(153)} &=
\left(\begin{array}{@{}R@{}R@{}R@{}R@{}R@{}R@{}R@{}R@{}R@{}R@{}}
1 & 0 & 0 & 0 & 0 & 0 & 0 & 0 & 0 & 0 \\
0 & 1 & 0 & 0 & 0 & 0 & 0 & 0 & 0 & 0 \\
0 & 0 & 1 & 0 & 0 & 0 & 0 & 0 & 0 & 0 \\
0 & 1 & 0 & 0 & 0 & 0 & 0 & 0 & 1 & 0 \\
0 & 0 & 0 & 0 & 1 & 0 & 0 & 0 & 0 & 0 \\
0 & 0 & 0 & 0 & 0 & 1 & 0 & 0 & 0 & 0 \\
0 & 0 & 0 & 0 & 0 & 0 & 1 & 0 & 0 & 0 \\
0 & 0 & 0 & 0 & 0 & 0 & 0 & 1 & 0 & 0 \\
0 & 0 & 0 & 0 & 0 & 0 & 0 & 0 & 1 & 0 \\
0 & 0 & 0 & 0 & 0 & 0 & 0 & 0 & 0 & 1
\end{array}\right),&\quad
B_{(154)} &=
\left(\begin{array}{@{}R@{}R@{}R@{}R@{}R@{}R@{}R@{}R@{}R@{}R@{}}
1 & 0 & 0 & 0 & 0 & 0 & 0 & 0 & 0 & 0 \\
0 & 1 & 0 & 0 & 0 & 0 & 0 & 0 & 0 & 0 \\
0 & 0 & 1 & 0 & 0 & 0 & 0 & 0 & 0 & 0 \\
0 & 0 & 1 & 0 & 0 & 0 & 0 & 0 & 0 & 1 \\
0 & 0 & 0 & 0 & 1 & 0 & 0 & 0 & 0 & 0 \\
0 & 0 & 0 & 0 & 0 & 1 & 0 & 0 & 0 & 0 \\
0 & 0 & 0 & 0 & 0 & 0 & 1 & 0 & 0 & 0 \\
0 & 0 & 0 & 0 & 0 & 0 & 0 & 1 & 0 & 0 \\
0 & 0 & 0 & 0 & 0 & 0 & 0 & 0 & 1 & 0 \\
0 & 0 & 0 & 0 & 0 & 0 & 0 & 0 & 0 & 1
\end{array}\right),\\
B_{(231)} &=
\left(\begin{array}{@{}R@{}R@{}R@{}R@{}R@{}R@{}R@{}R@{}R@{}R@{}}
1 & 0 & 0 & 0 & 0 & 0 & 0 & 0 & 0 & 0 \\
0 & 1 & 0 & 0 & 0 & 0 & 0 & 0 & 0 & 0 \\
0 & 0 & 1 & 0 & 0 & 0 & 0 & 0 & 0 & 0 \\
0 & 0 & 0 & 1 & 0 & 0 & 0 & 0 & 0 & 0 \\
-1 & 1 & 0 & 0 & 0 & 0 & 0 & 0 & 0 & 0 \\
0 & 0 & 0 & 0 & 0 & 1 & 0 & 0 & 0 & 0 \\
0 & 0 & 0 & 0 & 0 & 0 & 1 & 0 & 0 & 0 \\
0 & 0 & 0 & 0 & 0 & 0 & 0 & 1 & 0 & 0 \\
0 & 0 & 0 & 0 & 0 & 0 & 0 & 0 & 1 & 0 \\
0 & 0 & 0 & 0 & 0 & 0 & 0 & 0 & 0 & 1
\end{array}\right),&\quad
B_{(234)} &=
\left(\begin{array}{@{}R@{}R@{}R@{}R@{}R@{}R@{}R@{}R@{}R@{}R@{}}
1 & 0 & 0 & 0 & 0 & 0 & 0 & 0 & 0 & 0 \\
0 & 1 & 0 & 0 & 0 & 0 & 0 & 0 & 0 & 0 \\
0 & 0 & 1 & 0 & 0 & 0 & 0 & 0 & 0 & 0 \\
0 & 0 & 0 & 1 & 0 & 0 & 0 & 0 & 0 & 0 \\
0 & 0 & 0 & 0 & 0 & 1 & 0 & -1 & 0 & 0 \\
0 & 0 & 0 & 0 & 0 & 1 & 0 & 0 & 0 & 0 \\
0 & 0 & 0 & 0 & 0 & 0 & 1 & 0 & 0 & 0 \\
0 & 0 & 0 & 0 & 0 & 0 & 0 & 1 & 0 & 0 \\
0 & 0 & 0 & 0 & 0 & 0 & 0 & 0 & 1 & 0 \\
0 & 0 & 0 & 0 & 0 & 0 & 0 & 0 & 0 & 1
\end{array}\right),\\
B_{(235)} &=
\left(\begin{array}{@{}R@{}R@{}R@{}R@{}R@{}R@{}R@{}R@{}R@{}R@{}}
1 & 0 & 0 & 0 & 0 & 0 & 0 & 0 & 0 & 0 \\
0 & 1 & 0 & 0 & 0 & 0 & 0 & 0 & 0 & 0 \\
0 & 0 & 1 & 0 & 0 & 0 & 0 & 0 & 0 & 0 \\
0 & 0 & 0 & 1 & 0 & 0 & 0 & 0 & 0 & 0 \\
0 & 0 & 0 & 0 & 0 & 0 & 1 & 0 & -1 & 0 \\
0 & 0 & 0 & 0 & 0 & 1 & 0 & 0 & 0 & 0 \\
0 & 0 & 0 & 0 & 0 & 0 & 1 & 0 & 0 & 0 \\
0 & 0 & 0 & 0 & 0 & 0 & 0 & 1 & 0 & 0 \\
0 & 0 & 0 & 0 & 0 & 0 & 0 & 0 & 1 & 0 \\
0 & 0 & 0 & 0 & 0 & 0 & 0 & 0 & 0 & 1
\end{array}\right),&\quad
B_{(241)} &=
\left(\begin{array}{@{}R@{}R@{}R@{}R@{}R@{}R@{}R@{}R@{}R@{}R@{}}
1 & 0 & 0 & 0 & 0 & 0 & 0 & 0 & 0 & 0 \\
0 & 1 & 0 & 0 & 0 & 0 & 0 & 0 & 0 & 0 \\
0 & 0 & 1 & 0 & 0 & 0 & 0 & 0 & 0 & 0 \\
0 & 0 & 0 & 1 & 0 & 0 & 0 & 0 & 0 & 0 \\
0 & 0 & 0 & 0 & 1 & 0 & 0 & 0 & 0 & 0 \\
-1 & 0 & 1 & 0 & 0 & 0 & 0 & 0 & 0 & 0 \\
0 & 0 & 0 & 0 & 0 & 0 & 1 & 0 & 0 & 0 \\
0 & 0 & 0 & 0 & 0 & 0 & 0 & 1 & 0 & 0 \\
0 & 0 & 0 & 0 & 0 & 0 & 0 & 0 & 1 & 0 \\
0 & 0 & 0 & 0 & 0 & 0 & 0 & 0 & 0 & 1
\end{array}\right),\\
B_{(243)} &=
\left(\begin{array}{@{}R@{}R@{}R@{}R@{}R@{}R@{}R@{}R@{}R@{}R@{}}
1 & 0 & 0 & 0 & 0 & 0 & 0 & 0 & 0 & 0 \\
0 & 1 & 0 & 0 & 0 & 0 & 0 & 0 & 0 & 0 \\
0 & 0 & 1 & 0 & 0 & 0 & 0 & 0 & 0 & 0 \\
0 & 0 & 0 & 1 & 0 & 0 & 0 & 0 & 0 & 0 \\
0 & 0 & 0 & 0 & 1 & 0 & 0 & 0 & 0 & 0 \\
0 & 0 & 0 & 0 & 1 & 0 & 0 & 1 & 0 & 0 \\
0 & 0 & 0 & 0 & 0 & 0 & 1 & 0 & 0 & 0 \\
0 & 0 & 0 & 0 & 0 & 0 & 0 & 1 & 0 & 0 \\
0 & 0 & 0 & 0 & 0 & 0 & 0 & 0 & 1 & 0 \\
0 & 0 & 0 & 0 & 0 & 0 & 0 & 0 & 0 & 1
\end{array}\right),&\quad
B_{(245)} &=
\left(\begin{array}{@{}R@{}R@{}R@{}R@{}R@{}R@{}R@{}R@{}R@{}R@{}}
1 & 0 & 0 & 0 & 0 & 0 & 0 & 0 & 0 & 0 \\
0 & 1 & 0 & 0 & 0 & 0 & 0 & 0 & 0 & 0 \\
0 & 0 & 1 & 0 & 0 & 0 & 0 & 0 & 0 & 0 \\
0 & 0 & 0 & 1 & 0 & 0 & 0 & 0 & 0 & 0 \\
0 & 0 & 0 & 0 & 1 & 0 & 0 & 0 & 0 & 0 \\
0 & 0 & 0 & 0 & 0 & 0 & 1 & 0 & 0 & -1 \\
0 & 0 & 0 & 0 & 0 & 0 & 1 & 0 & 0 & 0 \\
0 & 0 & 0 & 0 & 0 & 0 & 0 & 1 & 0 & 0 \\
0 & 0 & 0 & 0 & 0 & 0 & 0 & 0 & 1 & 0 \\
0 & 0 & 0 & 0 & 0 & 0 & 0 & 0 & 0 & 1
\end{array}\right),\\
B_{(251)} &=
\left(\begin{array}{@{}R@{}R@{}R@{}R@{}R@{}R@{}R@{}R@{}R@{}R@{}}
1 & 0 & 0 & 0 & 0 & 0 & 0 & 0 & 0 & 0 \\
0 & 1 & 0 & 0 & 0 & 0 & 0 & 0 & 0 & 0 \\
0 & 0 & 1 & 0 & 0 & 0 & 0 & 0 & 0 & 0 \\
0 & 0 & 0 & 1 & 0 & 0 & 0 & 0 & 0 & 0 \\
0 & 0 & 0 & 0 & 1 & 0 & 0 & 0 & 0 & 0 \\
0 & 0 & 0 & 0 & 0 & 1 & 0 & 0 & 0 & 0 \\
-1 & 0 & 0 & 1 & 0 & 0 & 0 & 0 & 0 & 0 \\
0 & 0 & 0 & 0 & 0 & 0 & 0 & 1 & 0 & 0 \\
0 & 0 & 0 & 0 & 0 & 0 & 0 & 0 & 1 & 0 \\
0 & 0 & 0 & 0 & 0 & 0 & 0 & 0 & 0 & 1
\end{array}\right),&\quad
B_{(253)} &=
\left(\begin{array}{@{}R@{}R@{}R@{}R@{}R@{}R@{}R@{}R@{}R@{}R@{}}
1 & 0 & 0 & 0 & 0 & 0 & 0 & 0 & 0 & 0 \\
0 & 1 & 0 & 0 & 0 & 0 & 0 & 0 & 0 & 0 \\
0 & 0 & 1 & 0 & 0 & 0 & 0 & 0 & 0 & 0 \\
0 & 0 & 0 & 1 & 0 & 0 & 0 & 0 & 0 & 0 \\
0 & 0 & 0 & 0 & 1 & 0 & 0 & 0 & 0 & 0 \\
0 & 0 & 0 & 0 & 0 & 1 & 0 & 0 & 0 & 0 \\
0 & 0 & 0 & 0 & 1 & 0 & 0 & 0 & 1 & 0 \\
0 & 0 & 0 & 0 & 0 & 0 & 0 & 1 & 0 & 0 \\
0 & 0 & 0 & 0 & 0 & 0 & 0 & 0 & 1 & 0 \\
0 & 0 & 0 & 0 & 0 & 0 & 0 & 0 & 0 & 1
\end{array}\right),\\
B_{(254)} &=
\left(\begin{array}{@{}R@{}R@{}R@{}R@{}R@{}R@{}R@{}R@{}R@{}R@{}}
1 & 0 & 0 & 0 & 0 & 0 & 0 & 0 & 0 & 0 \\
0 & 1 & 0 & 0 & 0 & 0 & 0 & 0 & 0 & 0 \\
0 & 0 & 1 & 0 & 0 & 0 & 0 & 0 & 0 & 0 \\
0 & 0 & 0 & 1 & 0 & 0 & 0 & 0 & 0 & 0 \\
0 & 0 & 0 & 0 & 1 & 0 & 0 & 0 & 0 & 0 \\
0 & 0 & 0 & 0 & 0 & 1 & 0 & 0 & 0 & 0 \\
0 & 0 & 0 & 0 & 0 & 1 & 0 & 0 & 0 & 1 \\
0 & 0 & 0 & 0 & 0 & 0 & 0 & 1 & 0 & 0 \\
0 & 0 & 0 & 0 & 0 & 0 & 0 & 0 & 1 & 0 \\
0 & 0 & 0 & 0 & 0 & 0 & 0 & 0 & 0 & 1
\end{array}\right),&\quad
B_{(341)} &=
\left(\begin{array}{@{}R@{}R@{}R@{}R@{}R@{}R@{}R@{}R@{}R@{}R@{}}
1 & 0 & 0 & 0 & 0 & 0 & 0 & 0 & 0 & 0 \\
0 & 1 & 0 & 0 & 0 & 0 & 0 & 0 & 0 & 0 \\
0 & 0 & 1 & 0 & 0 & 0 & 0 & 0 & 0 & 0 \\
0 & 0 & 0 & 1 & 0 & 0 & 0 & 0 & 0 & 0 \\
0 & 0 & 0 & 0 & 1 & 0 & 0 & 0 & 0 & 0 \\
0 & 0 & 0 & 0 & 0 & 1 & 0 & 0 & 0 & 0 \\
0 & 0 & 0 & 0 & 0 & 0 & 1 & 0 & 0 & 0 \\
0 & -1 & 1 & 0 & 0 & 0 & 0 & 0 & 0 & 0 \\
0 & 0 & 0 & 0 & 0 & 0 & 0 & 0 & 1 & 0 \\
0 & 0 & 0 & 0 & 0 & 0 & 0 & 0 & 0 & 1
\end{array}\right),\\
B_{(342)} &=
\left(\begin{array}{@{}R@{}R@{}R@{}R@{}R@{}R@{}R@{}R@{}R@{}R@{}}
1 & 0 & 0 & 0 & 0 & 0 & 0 & 0 & 0 & 0 \\
0 & 1 & 0 & 0 & 0 & 0 & 0 & 0 & 0 & 0 \\
0 & 0 & 1 & 0 & 0 & 0 & 0 & 0 & 0 & 0 \\
0 & 0 & 0 & 1 & 0 & 0 & 0 & 0 & 0 & 0 \\
0 & 0 & 0 & 0 & 1 & 0 & 0 & 0 & 0 & 0 \\
0 & 0 & 0 & 0 & 0 & 1 & 0 & 0 & 0 & 0 \\
0 & 0 & 0 & 0 & 0 & 0 & 1 & 0 & 0 & 0 \\
0 & 0 & 0 & 0 & -1 & 1 & 0 & 0 & 0 & 0 \\
0 & 0 & 0 & 0 & 0 & 0 & 0 & 0 & 1 & 0 \\
0 & 0 & 0 & 0 & 0 & 0 & 0 & 0 & 0 & 1
\end{array}\right),&\quad
B_{(345)} &=
\left(\begin{array}{@{}R@{}R@{}R@{}R@{}R@{}R@{}R@{}R@{}R@{}R@{}}
1 & 0 & 0 & 0 & 0 & 0 & 0 & 0 & 0 & 0 \\
0 & 1 & 0 & 0 & 0 & 0 & 0 & 0 & 0 & 0 \\
0 & 0 & 1 & 0 & 0 & 0 & 0 & 0 & 0 & 0 \\
0 & 0 & 0 & 1 & 0 & 0 & 0 & 0 & 0 & 0 \\
0 & 0 & 0 & 0 & 1 & 0 & 0 & 0 & 0 & 0 \\
0 & 0 & 0 & 0 & 0 & 1 & 0 & 0 & 0 & 0 \\
0 & 0 & 0 & 0 & 0 & 0 & 1 & 0 & 0 & 0 \\
0 & 0 & 0 & 0 & 0 & 0 & 0 & 0 & 1 & -1 \\
0 & 0 & 0 & 0 & 0 & 0 & 0 & 0 & 1 & 0 \\
0 & 0 & 0 & 0 & 0 & 0 & 0 & 0 & 0 & 1
\end{array}\right),\\
B_{(351)} &=
\left(\begin{array}{@{}R@{}R@{}R@{}R@{}R@{}R@{}R@{}R@{}R@{}R@{}}
1 & 0 & 0 & 0 & 0 & 0 & 0 & 0 & 0 & 0 \\
0 & 1 & 0 & 0 & 0 & 0 & 0 & 0 & 0 & 0 \\
0 & 0 & 1 & 0 & 0 & 0 & 0 & 0 & 0 & 0 \\
0 & 0 & 0 & 1 & 0 & 0 & 0 & 0 & 0 & 0 \\
0 & 0 & 0 & 0 & 1 & 0 & 0 & 0 & 0 & 0 \\
0 & 0 & 0 & 0 & 0 & 1 & 0 & 0 & 0 & 0 \\
0 & 0 & 0 & 0 & 0 & 0 & 1 & 0 & 0 & 0 \\
0 & 0 & 0 & 0 & 0 & 0 & 0 & 1 & 0 & 0 \\
0 & -1 & 0 & 1 & 0 & 0 & 0 & 0 & 0 & 0 \\
0 & 0 & 0 & 0 & 0 & 0 & 0 & 0 & 0 & 1
\end{array}\right),&\quad
B_{(352)} &=
\left(\begin{array}{@{}R@{}R@{}R@{}R@{}R@{}R@{}R@{}R@{}R@{}R@{}}
1 & 0 & 0 & 0 & 0 & 0 & 0 & 0 & 0 & 0 \\
0 & 1 & 0 & 0 & 0 & 0 & 0 & 0 & 0 & 0 \\
0 & 0 & 1 & 0 & 0 & 0 & 0 & 0 & 0 & 0 \\
0 & 0 & 0 & 1 & 0 & 0 & 0 & 0 & 0 & 0 \\
0 & 0 & 0 & 0 & 1 & 0 & 0 & 0 & 0 & 0 \\
0 & 0 & 0 & 0 & 0 & 1 & 0 & 0 & 0 & 0 \\
0 & 0 & 0 & 0 & 0 & 0 & 1 & 0 & 0 & 0 \\
0 & 0 & 0 & 0 & 0 & 0 & 0 & 1 & 0 & 0 \\
0 & 0 & 0 & 0 & -1 & 0 & 1 & 0 & 0 & 0 \\
0 & 0 & 0 & 0 & 0 & 0 & 0 & 0 & 0 & 1
\end{array}\right),\\
B_{(354)} &=
\left(\begin{array}{@{}R@{}R@{}R@{}R@{}R@{}R@{}R@{}R@{}R@{}R@{}}
1 & 0 & 0 & 0 & 0 & 0 & 0 & 0 & 0 & 0 \\
0 & 1 & 0 & 0 & 0 & 0 & 0 & 0 & 0 & 0 \\
0 & 0 & 1 & 0 & 0 & 0 & 0 & 0 & 0 & 0 \\
0 & 0 & 0 & 1 & 0 & 0 & 0 & 0 & 0 & 0 \\
0 & 0 & 0 & 0 & 1 & 0 & 0 & 0 & 0 & 0 \\
0 & 0 & 0 & 0 & 0 & 1 & 0 & 0 & 0 & 0 \\
0 & 0 & 0 & 0 & 0 & 0 & 1 & 0 & 0 & 0 \\
0 & 0 & 0 & 0 & 0 & 0 & 0 & 1 & 0 & 0 \\
0 & 0 & 0 & 0 & 0 & 0 & 0 & 1 & 0 & 1 \\
0 & 0 & 0 & 0 & 0 & 0 & 0 & 0 & 0 & 1
\end{array}\right),&\quad
B_{(451)} &=
\left(\begin{array}{@{}R@{}R@{}R@{}R@{}R@{}R@{}R@{}R@{}R@{}R@{}}
1 & 0 & 0 & 0 & 0 & 0 & 0 & 0 & 0 & 0 \\
0 & 1 & 0 & 0 & 0 & 0 & 0 & 0 & 0 & 0 \\
0 & 0 & 1 & 0 & 0 & 0 & 0 & 0 & 0 & 0 \\
0 & 0 & 0 & 1 & 0 & 0 & 0 & 0 & 0 & 0 \\
0 & 0 & 0 & 0 & 1 & 0 & 0 & 0 & 0 & 0 \\
0 & 0 & 0 & 0 & 0 & 1 & 0 & 0 & 0 & 0 \\
0 & 0 & 0 & 0 & 0 & 0 & 1 & 0 & 0 & 0 \\
0 & 0 & 0 & 0 & 0 & 0 & 0 & 1 & 0 & 0 \\
0 & 0 & 0 & 0 & 0 & 0 & 0 & 0 & 1 & 0 \\
0 & 0 & -1 & 1 & 0 & 0 & 0 & 0 & 0 & 0
\end{array}\right),\\
B_{(452)} &=
\left(\begin{array}{@{}R@{}R@{}R@{}R@{}R@{}R@{}R@{}R@{}R@{}R@{}}
1 & 0 & 0 & 0 & 0 & 0 & 0 & 0 & 0 & 0 \\
0 & 1 & 0 & 0 & 0 & 0 & 0 & 0 & 0 & 0 \\
0 & 0 & 1 & 0 & 0 & 0 & 0 & 0 & 0 & 0 \\
0 & 0 & 0 & 1 & 0 & 0 & 0 & 0 & 0 & 0 \\
0 & 0 & 0 & 0 & 1 & 0 & 0 & 0 & 0 & 0 \\
0 & 0 & 0 & 0 & 0 & 1 & 0 & 0 & 0 & 0 \\
0 & 0 & 0 & 0 & 0 & 0 & 1 & 0 & 0 & 0 \\
0 & 0 & 0 & 0 & 0 & 0 & 0 & 1 & 0 & 0 \\
0 & 0 & 0 & 0 & 0 & 0 & 0 & 0 & 1 & 0 \\
0 & 0 & 0 & 0 & 0 & -1 & 1 & 0 & 0 & 0
\end{array}\right),&\quad
B_{(453)} &=
\left(\begin{array}{@{}R@{}R@{}R@{}R@{}R@{}R@{}R@{}R@{}R@{}R@{}}
1 & 0 & 0 & 0 & 0 & 0 & 0 & 0 & 0 & 0 \\
0 & 1 & 0 & 0 & 0 & 0 & 0 & 0 & 0 & 0 \\
0 & 0 & 1 & 0 & 0 & 0 & 0 & 0 & 0 & 0 \\
0 & 0 & 0 & 1 & 0 & 0 & 0 & 0 & 0 & 0 \\
0 & 0 & 0 & 0 & 1 & 0 & 0 & 0 & 0 & 0 \\
0 & 0 & 0 & 0 & 0 & 1 & 0 & 0 & 0 & 0 \\
0 & 0 & 0 & 0 & 0 & 0 & 1 & 0 & 0 & 0 \\
0 & 0 & 0 & 0 & 0 & 0 & 0 & 1 & 0 & 0 \\
0 & 0 & 0 & 0 & 0 & 0 & 0 & 0 & 1 & 0 \\
0 & 0 & 0 & 0 & 0 & 0 & 0 & -1 & 1 & 0
\end{array}\right).
\end{alignat*}
}%

The matrices $Q$, $Q^{-1}$ and $D_{\omega}=Q^{-1}B_{\omega}Q$ take
the form:
{\renewcommand{\arraystretch}{0.7}%
\begin{alignat*}{2}
Q &=
\left(\begin{array}{@{}R@{}R@{}R@{}R|@{~}R@{}R@{}R@{}R@{}R@{}R@{}}
1 & 0 & 0 & 0 & 0 & 0 & 0 & 0 & 0 & 0 \\
0 & 1 & 0 & 0 & 0 & 0 & 0 & 0 & 0 & 0 \\
0 & 0 & 1 & 0 & 0 & 0 & 0 & 0 & 0 & 0 \\
0 & 0 & 0 & 1 & 0 & 0 & 0 & 0 & 0 & 0 \\
\hline
-1 & 1 & 0 & 0 & 1 & 0 & 0 & 0 & 0 & 0 \\
-1 & 0 & 1 & 0 & 0 & 1 & 0 & 0 & 0 & 0 \\
-1 & 0 & 0 & 1 & 0 & 0 & 1 & 0 & 0 & 0 \\
0 & -1 & 1 & 0 & 0 & 0 & 0 & 1 & 0 & 0 \\
0 & -1 & 0 & 1 & 0 & 0 & 0 & 0 & 1 & 0 \\
0 & 0 & -1 & 1 & 0 & 0 & 0 & 0 & 0 & 1
\end{array}\right),&\quad
Q^{-1} &=
\left(\begin{array}{@{}R@{}R@{}R@{}R|@{~}R@{}R@{}R@{}R@{}R@{}R@{}}
1 & 0 & 0 & 0 & 0 & 0 & 0 & 0 & 0 & 0 \\
0 & 1 & 0 & 0 & 0 & 0 & 0 & 0 & 0 & 0 \\
0 & 0 & 1 & 0 & 0 & 0 & 0 & 0 & 0 & 0 \\
0 & 0 & 0 & 1 & 0 & 0 & 0 & 0 & 0 & 0 \\
\hline
1 & -1 & 0 & 0 & 1 & 0 & 0 & 0 & 0 & 0 \\
1 & 0 & -1 & 0 & 0 & 1 & 0 & 0 & 0 & 0 \\
1 & 0 & 0 & -1 & 0 & 0 & 1 & 0 & 0 & 0 \\
0 & 1 & -1 & 0 & 0 & 0 & 0 & 1 & 0 & 0 \\
0 & 1 & 0 & -1 & 0 & 0 & 0 & 0 & 1 & 0 \\
0 & 0 & 1 & -1 & 0 & 0 & 0 & 0 & 0 & 1
\end{array}\right),\\
D_{(123)} &=
\left(\begin{array}{@{}R@{}R@{}R@{}R|@{~}R@{}R@{}R@{}R@{}R@{}R@{}}
1 & 0 & 0 & 0 & -1 & 0 & 0 & 0 & 0 & 0 \\
0 & 1 & 0 & 0 & 0 & 0 & 0 & 0 & 0 & 0 \\
0 & 0 & 1 & 0 & 0 & 0 & 0 & 0 & 0 & 0 \\
0 & 0 & 0 & 1 & 0 & 0 & 0 & 0 & 0 & 0 \\
\hline
0 & 0 & 0 & 0 & 0 & 0 & 0 & 0 & 0 & 0 \\
0 & 0 & 0 & 0 & -1 & 1 & 0 & 0 & 0 & 0 \\
0 & 0 & 0 & 0 & -1 & 0 & 1 & 0 & 0 & 0 \\
0 & 0 & 0 & 0 & 0 & 0 & 0 & 1 & 0 & 0 \\
0 & 0 & 0 & 0 & 0 & 0 & 0 & 0 & 1 & 0 \\
0 & 0 & 0 & 0 & 0 & 0 & 0 & 0 & 0 & 1
\end{array}\right),&\quad
D_{(124)} &=
\left(\begin{array}{@{}R@{}R@{}R@{}R|@{~}R@{}R@{}R@{}R@{}R@{}R@{}}
1 & 0 & 0 & 0 & 0 & -1 & 0 & 0 & 0 & 0 \\
0 & 1 & 0 & 0 & 0 & 0 & 0 & 0 & 0 & 0 \\
0 & 0 & 1 & 0 & 0 & 0 & 0 & 0 & 0 & 0 \\
0 & 0 & 0 & 1 & 0 & 0 & 0 & 0 & 0 & 0 \\
\hline
0 & 0 & 0 & 0 & 1 & -1 & 0 & 0 & 0 & 0 \\
0 & 0 & 0 & 0 & 0 & 0 & 0 & 0 & 0 & 0 \\
0 & 0 & 0 & 0 & 0 & -1 & 1 & 0 & 0 & 0 \\
0 & 0 & 0 & 0 & 0 & 0 & 0 & 1 & 0 & 0 \\
0 & 0 & 0 & 0 & 0 & 0 & 0 & 0 & 1 & 0 \\
0 & 0 & 0 & 0 & 0 & 0 & 0 & 0 & 0 & 1
\end{array}\right),\\
D_{(125)} &=
\left(\begin{array}{@{}R@{}R@{}R@{}R|@{~}R@{}R@{}R@{}R@{}R@{}R@{}}
1 & 0 & 0 & 0 & 0 & 0 & -1 & 0 & 0 & 0 \\
0 & 1 & 0 & 0 & 0 & 0 & 0 & 0 & 0 & 0 \\
0 & 0 & 1 & 0 & 0 & 0 & 0 & 0 & 0 & 0 \\
0 & 0 & 0 & 1 & 0 & 0 & 0 & 0 & 0 & 0 \\
\hline
0 & 0 & 0 & 0 & 1 & 0 & -1 & 0 & 0 & 0 \\
0 & 0 & 0 & 0 & 0 & 1 & -1 & 0 & 0 & 0 \\
0 & 0 & 0 & 0 & 0 & 0 & 0 & 0 & 0 & 0 \\
0 & 0 & 0 & 0 & 0 & 0 & 0 & 1 & 0 & 0 \\
0 & 0 & 0 & 0 & 0 & 0 & 0 & 0 & 1 & 0 \\
0 & 0 & 0 & 0 & 0 & 0 & 0 & 0 & 0 & 1
\end{array}\right),&\quad
D_{(132)} &=
\left(\begin{array}{@{}R@{}R@{}R@{}R|@{~}R@{}R@{}R@{}R@{}R@{}R@{}}
1 & 0 & 0 & 0 & 0 & 0 & 0 & 0 & 0 & 0 \\
0 & 1 & 0 & 0 & 1 & 0 & 0 & 0 & 0 & 0 \\
0 & 0 & 1 & 0 & 0 & 0 & 0 & 0 & 0 & 0 \\
0 & 0 & 0 & 1 & 0 & 0 & 0 & 0 & 0 & 0 \\
\hline
0 & 0 & 0 & 0 & 0 & 0 & 0 & 0 & 0 & 0 \\
0 & 0 & 0 & 0 & 0 & 1 & 0 & 0 & 0 & 0 \\
0 & 0 & 0 & 0 & 0 & 0 & 1 & 0 & 0 & 0 \\
0 & 0 & 0 & 0 & 1 & 0 & 0 & 1 & 0 & 0 \\
0 & 0 & 0 & 0 & 1 & 0 & 0 & 0 & 1 & 0 \\
0 & 0 & 0 & 0 & 0 & 0 & 0 & 0 & 0 & 1
\end{array}\right),\\
D_{(134)} &=
\left(\begin{array}{@{}R@{}R@{}R@{}R|@{~}R@{}R@{}R@{}R@{}R@{}R@{}}
1 & 0 & 0 & 0 & 0 & 0 & 0 & 0 & 0 & 0 \\
0 & 1 & 0 & 0 & 0 & 0 & 0 & -1 & 0 & 0 \\
0 & 0 & 1 & 0 & 0 & 0 & 0 & 0 & 0 & 0 \\
0 & 0 & 0 & 1 & 0 & 0 & 0 & 0 & 0 & 0 \\
\hline
0 & 0 & 0 & 0 & 1 & 0 & 0 & 1 & 0 & 0 \\
0 & 0 & 0 & 0 & 0 & 1 & 0 & 0 & 0 & 0 \\
0 & 0 & 0 & 0 & 0 & 0 & 1 & 0 & 0 & 0 \\
0 & 0 & 0 & 0 & 0 & 0 & 0 & 0 & 0 & 0 \\
0 & 0 & 0 & 0 & 0 & 0 & 0 & -1 & 1 & 0 \\
0 & 0 & 0 & 0 & 0 & 0 & 0 & 0 & 0 & 1
\end{array}\right),&\quad
D_{(135)} &=
\left(\begin{array}{@{}R@{}R@{}R@{}R|@{~}R@{}R@{}R@{}R@{}R@{}R@{}}
1 & 0 & 0 & 0 & 0 & 0 & 0 & 0 & 0 & 0 \\
0 & 1 & 0 & 0 & 0 & 0 & 0 & 0 & -1 & 0 \\
0 & 0 & 1 & 0 & 0 & 0 & 0 & 0 & 0 & 0 \\
0 & 0 & 0 & 1 & 0 & 0 & 0 & 0 & 0 & 0 \\
\hline
0 & 0 & 0 & 0 & 1 & 0 & 0 & 0 & 1 & 0 \\
0 & 0 & 0 & 0 & 0 & 1 & 0 & 0 & 0 & 0 \\
0 & 0 & 0 & 0 & 0 & 0 & 1 & 0 & 0 & 0 \\
0 & 0 & 0 & 0 & 0 & 0 & 0 & 1 & -1 & 0 \\
0 & 0 & 0 & 0 & 0 & 0 & 0 & 0 & 0 & 0 \\
0 & 0 & 0 & 0 & 0 & 0 & 0 & 0 & 0 & 1
\end{array}\right),\\
D_{(142)} &=
\left(\begin{array}{@{}R@{}R@{}R@{}R|@{~}R@{}R@{}R@{}R@{}R@{}R@{}}
1 & 0 & 0 & 0 & 0 & 0 & 0 & 0 & 0 & 0 \\
0 & 1 & 0 & 0 & 0 & 0 & 0 & 0 & 0 & 0 \\
0 & 0 & 1 & 0 & 0 & 1 & 0 & 0 & 0 & 0 \\
0 & 0 & 0 & 1 & 0 & 0 & 0 & 0 & 0 & 0 \\
\hline
0 & 0 & 0 & 0 & 1 & 0 & 0 & 0 & 0 & 0 \\
0 & 0 & 0 & 0 & 0 & 0 & 0 & 0 & 0 & 0 \\
0 & 0 & 0 & 0 & 0 & 0 & 1 & 0 & 0 & 0 \\
0 & 0 & 0 & 0 & 0 & -1 & 0 & 1 & 0 & 0 \\
0 & 0 & 0 & 0 & 0 & 0 & 0 & 0 & 1 & 0 \\
0 & 0 & 0 & 0 & 0 & 1 & 0 & 0 & 0 & 1
\end{array}\right),&\quad
D_{(143)} &=
\left(\begin{array}{@{}R@{}R@{}R@{}R|@{~}R@{}R@{}R@{}R@{}R@{}R@{}}
1 & 0 & 0 & 0 & 0 & 0 & 0 & 0 & 0 & 0 \\
0 & 1 & 0 & 0 & 0 & 0 & 0 & 0 & 0 & 0 \\
0 & 0 & 1 & 0 & 0 & 0 & 0 & 1 & 0 & 0 \\
0 & 0 & 0 & 1 & 0 & 0 & 0 & 0 & 0 & 0 \\
\hline
0 & 0 & 0 & 0 & 1 & 0 & 0 & 0 & 0 & 0 \\
0 & 0 & 0 & 0 & 0 & 1 & 0 & -1 & 0 & 0 \\
0 & 0 & 0 & 0 & 0 & 0 & 1 & 0 & 0 & 0 \\
0 & 0 & 0 & 0 & 0 & 0 & 0 & 0 & 0 & 0 \\
0 & 0 & 0 & 0 & 0 & 0 & 0 & 0 & 1 & 0 \\
0 & 0 & 0 & 0 & 0 & 0 & 0 & 1 & 0 & 1
\end{array}\right),\\
D_{(145)} &=
\left(\begin{array}{@{}R@{}R@{}R@{}R|@{~}R@{}R@{}R@{}R@{}R@{}R@{}}
1 & 0 & 0 & 0 & 0 & 0 & 0 & 0 & 0 & 0 \\
0 & 1 & 0 & 0 & 0 & 0 & 0 & 0 & 0 & 0 \\
0 & 0 & 1 & 0 & 0 & 0 & 0 & 0 & 0 & -1 \\
0 & 0 & 0 & 1 & 0 & 0 & 0 & 0 & 0 & 0 \\
\hline
0 & 0 & 0 & 0 & 1 & 0 & 0 & 0 & 0 & 0 \\
0 & 0 & 0 & 0 & 0 & 1 & 0 & 0 & 0 & 1 \\
0 & 0 & 0 & 0 & 0 & 0 & 1 & 0 & 0 & 0 \\
0 & 0 & 0 & 0 & 0 & 0 & 0 & 1 & 0 & 1 \\
0 & 0 & 0 & 0 & 0 & 0 & 0 & 0 & 1 & 0 \\
0 & 0 & 0 & 0 & 0 & 0 & 0 & 0 & 0 & 0
\end{array}\right),&\quad
D_{(152)} &=
\left(\begin{array}{@{}R@{}R@{}R@{}R|@{~}R@{}R@{}R@{}R@{}R@{}R@{}}
1 & 0 & 0 & 0 & 0 & 0 & 0 & 0 & 0 & 0 \\
0 & 1 & 0 & 0 & 0 & 0 & 0 & 0 & 0 & 0 \\
0 & 0 & 1 & 0 & 0 & 0 & 0 & 0 & 0 & 0 \\
0 & 0 & 0 & 1 & 0 & 0 & 1 & 0 & 0 & 0 \\
\hline
0 & 0 & 0 & 0 & 1 & 0 & 0 & 0 & 0 & 0 \\
0 & 0 & 0 & 0 & 0 & 1 & 0 & 0 & 0 & 0 \\
0 & 0 & 0 & 0 & 0 & 0 & 0 & 0 & 0 & 0 \\
0 & 0 & 0 & 0 & 0 & 0 & 0 & 1 & 0 & 0 \\
0 & 0 & 0 & 0 & 0 & 0 & -1 & 0 & 1 & 0 \\
0 & 0 & 0 & 0 & 0 & 0 & -1 & 0 & 0 & 1
\end{array}\right),\\
D_{(153)} &=
\left(\begin{array}{@{}R@{}R@{}R@{}R|@{~}R@{}R@{}R@{}R@{}R@{}R@{}}
1 & 0 & 0 & 0 & 0 & 0 & 0 & 0 & 0 & 0 \\
0 & 1 & 0 & 0 & 0 & 0 & 0 & 0 & 0 & 0 \\
0 & 0 & 1 & 0 & 0 & 0 & 0 & 0 & 0 & 0 \\
0 & 0 & 0 & 1 & 0 & 0 & 0 & 0 & 1 & 0 \\
\hline
0 & 0 & 0 & 0 & 1 & 0 & 0 & 0 & 0 & 0 \\
0 & 0 & 0 & 0 & 0 & 1 & 0 & 0 & 0 & 0 \\
0 & 0 & 0 & 0 & 0 & 0 & 1 & 0 & -1 & 0 \\
0 & 0 & 0 & 0 & 0 & 0 & 0 & 1 & 0 & 0 \\
0 & 0 & 0 & 0 & 0 & 0 & 0 & 0 & 0 & 0 \\
0 & 0 & 0 & 0 & 0 & 0 & 0 & 0 & -1 & 1
\end{array}\right),&\quad
D_{(154)} &=
\left(\begin{array}{@{}R@{}R@{}R@{}R|@{~}R@{}R@{}R@{}R@{}R@{}R@{}}
1 & 0 & 0 & 0 & 0 & 0 & 0 & 0 & 0 & 0 \\
0 & 1 & 0 & 0 & 0 & 0 & 0 & 0 & 0 & 0 \\
0 & 0 & 1 & 0 & 0 & 0 & 0 & 0 & 0 & 0 \\
0 & 0 & 0 & 1 & 0 & 0 & 0 & 0 & 0 & 1 \\
\hline
0 & 0 & 0 & 0 & 1 & 0 & 0 & 0 & 0 & 0 \\
0 & 0 & 0 & 0 & 0 & 1 & 0 & 0 & 0 & 0 \\
0 & 0 & 0 & 0 & 0 & 0 & 1 & 0 & 0 & -1 \\
0 & 0 & 0 & 0 & 0 & 0 & 0 & 1 & 0 & 0 \\
0 & 0 & 0 & 0 & 0 & 0 & 0 & 0 & 1 & -1 \\
0 & 0 & 0 & 0 & 0 & 0 & 0 & 0 & 0 & 0
\end{array}\right),\\
D_{(231)} &=
\left(\begin{array}{@{}R@{}R@{}R@{}R|@{~}R@{}R@{}R@{}R@{}R@{}R@{}}
1 & 0 & 0 & 0 & 0 & 0 & 0 & 0 & 0 & 0 \\
0 & 1 & 0 & 0 & 0 & 0 & 0 & 0 & 0 & 0 \\
0 & 0 & 1 & 0 & 0 & 0 & 0 & 0 & 0 & 0 \\
0 & 0 & 0 & 1 & 0 & 0 & 0 & 0 & 0 & 0 \\
\hline
0 & 0 & 0 & 0 & 0 & 0 & 0 & 0 & 0 & 0 \\
0 & 0 & 0 & 0 & 0 & 1 & 0 & 0 & 0 & 0 \\
0 & 0 & 0 & 0 & 0 & 0 & 1 & 0 & 0 & 0 \\
0 & 0 & 0 & 0 & 0 & 0 & 0 & 1 & 0 & 0 \\
0 & 0 & 0 & 0 & 0 & 0 & 0 & 0 & 1 & 0 \\
0 & 0 & 0 & 0 & 0 & 0 & 0 & 0 & 0 & 1
\end{array}\right),&\quad
D_{(234)} &=
\left(\begin{array}{@{}R@{}R@{}R@{}R|@{~}R@{}R@{}R@{}R@{}R@{}R@{}}
1 & 0 & 0 & 0 & 0 & 0 & 0 & 0 & 0 & 0 \\
0 & 1 & 0 & 0 & 0 & 0 & 0 & 0 & 0 & 0 \\
0 & 0 & 1 & 0 & 0 & 0 & 0 & 0 & 0 & 0 \\
0 & 0 & 0 & 1 & 0 & 0 & 0 & 0 & 0 & 0 \\
\hline
0 & 0 & 0 & 0 & 0 & 1 & 0 & -1 & 0 & 0 \\
0 & 0 & 0 & 0 & 0 & 1 & 0 & 0 & 0 & 0 \\
0 & 0 & 0 & 0 & 0 & 0 & 1 & 0 & 0 & 0 \\
0 & 0 & 0 & 0 & 0 & 0 & 0 & 1 & 0 & 0 \\
0 & 0 & 0 & 0 & 0 & 0 & 0 & 0 & 1 & 0 \\
0 & 0 & 0 & 0 & 0 & 0 & 0 & 0 & 0 & 1
\end{array}\right),\\
D_{(235)} &=
\left(\begin{array}{@{}R@{}R@{}R@{}R|@{~}R@{}R@{}R@{}R@{}R@{}R@{}}
1 & 0 & 0 & 0 & 0 & 0 & 0 & 0 & 0 & 0 \\
0 & 1 & 0 & 0 & 0 & 0 & 0 & 0 & 0 & 0 \\
0 & 0 & 1 & 0 & 0 & 0 & 0 & 0 & 0 & 0 \\
0 & 0 & 0 & 1 & 0 & 0 & 0 & 0 & 0 & 0 \\
\hline
0 & 0 & 0 & 0 & 0 & 0 & 1 & 0 & -1 & 0 \\
0 & 0 & 0 & 0 & 0 & 1 & 0 & 0 & 0 & 0 \\
0 & 0 & 0 & 0 & 0 & 0 & 1 & 0 & 0 & 0 \\
0 & 0 & 0 & 0 & 0 & 0 & 0 & 1 & 0 & 0 \\
0 & 0 & 0 & 0 & 0 & 0 & 0 & 0 & 1 & 0 \\
0 & 0 & 0 & 0 & 0 & 0 & 0 & 0 & 0 & 1
\end{array}\right),&\quad
D_{(241)} &=
\left(\begin{array}{@{}R@{}R@{}R@{}R|@{~}R@{}R@{}R@{}R@{}R@{}R@{}}
1 & 0 & 0 & 0 & 0 & 0 & 0 & 0 & 0 & 0 \\
0 & 1 & 0 & 0 & 0 & 0 & 0 & 0 & 0 & 0 \\
0 & 0 & 1 & 0 & 0 & 0 & 0 & 0 & 0 & 0 \\
0 & 0 & 0 & 1 & 0 & 0 & 0 & 0 & 0 & 0 \\
\hline
0 & 0 & 0 & 0 & 1 & 0 & 0 & 0 & 0 & 0 \\
0 & 0 & 0 & 0 & 0 & 0 & 0 & 0 & 0 & 0 \\
0 & 0 & 0 & 0 & 0 & 0 & 1 & 0 & 0 & 0 \\
0 & 0 & 0 & 0 & 0 & 0 & 0 & 1 & 0 & 0 \\
0 & 0 & 0 & 0 & 0 & 0 & 0 & 0 & 1 & 0 \\
0 & 0 & 0 & 0 & 0 & 0 & 0 & 0 & 0 & 1
\end{array}\right),\\
D_{(243)} &=
\left(\begin{array}{@{}R@{}R@{}R@{}R|@{~}R@{}R@{}R@{}R@{}R@{}R@{}}
1 & 0 & 0 & 0 & 0 & 0 & 0 & 0 & 0 & 0 \\
0 & 1 & 0 & 0 & 0 & 0 & 0 & 0 & 0 & 0 \\
0 & 0 & 1 & 0 & 0 & 0 & 0 & 0 & 0 & 0 \\
0 & 0 & 0 & 1 & 0 & 0 & 0 & 0 & 0 & 0 \\
\hline
0 & 0 & 0 & 0 & 1 & 0 & 0 & 0 & 0 & 0 \\
0 & 0 & 0 & 0 & 1 & 0 & 0 & 1 & 0 & 0 \\
0 & 0 & 0 & 0 & 0 & 0 & 1 & 0 & 0 & 0 \\
0 & 0 & 0 & 0 & 0 & 0 & 0 & 1 & 0 & 0 \\
0 & 0 & 0 & 0 & 0 & 0 & 0 & 0 & 1 & 0 \\
0 & 0 & 0 & 0 & 0 & 0 & 0 & 0 & 0 & 1
\end{array}\right),&\quad
D_{(245)} &=
\left(\begin{array}{@{}R@{}R@{}R@{}R|@{~}R@{}R@{}R@{}R@{}R@{}R@{}}
1 & 0 & 0 & 0 & 0 & 0 & 0 & 0 & 0 & 0 \\
0 & 1 & 0 & 0 & 0 & 0 & 0 & 0 & 0 & 0 \\
0 & 0 & 1 & 0 & 0 & 0 & 0 & 0 & 0 & 0 \\
0 & 0 & 0 & 1 & 0 & 0 & 0 & 0 & 0 & 0 \\
\hline
0 & 0 & 0 & 0 & 1 & 0 & 0 & 0 & 0 & 0 \\
0 & 0 & 0 & 0 & 0 & 0 & 1 & 0 & 0 & -1 \\
0 & 0 & 0 & 0 & 0 & 0 & 1 & 0 & 0 & 0 \\
0 & 0 & 0 & 0 & 0 & 0 & 0 & 1 & 0 & 0 \\
0 & 0 & 0 & 0 & 0 & 0 & 0 & 0 & 1 & 0 \\
0 & 0 & 0 & 0 & 0 & 0 & 0 & 0 & 0 & 1
\end{array}\right),\\
D_{(251)} &=
\left(\begin{array}{@{}R@{}R@{}R@{}R|@{~}R@{}R@{}R@{}R@{}R@{}R@{}}
1 & 0 & 0 & 0 & 0 & 0 & 0 & 0 & 0 & 0 \\
0 & 1 & 0 & 0 & 0 & 0 & 0 & 0 & 0 & 0 \\
0 & 0 & 1 & 0 & 0 & 0 & 0 & 0 & 0 & 0 \\
0 & 0 & 0 & 1 & 0 & 0 & 0 & 0 & 0 & 0 \\
\hline
0 & 0 & 0 & 0 & 1 & 0 & 0 & 0 & 0 & 0 \\
0 & 0 & 0 & 0 & 0 & 1 & 0 & 0 & 0 & 0 \\
0 & 0 & 0 & 0 & 0 & 0 & 0 & 0 & 0 & 0 \\
0 & 0 & 0 & 0 & 0 & 0 & 0 & 1 & 0 & 0 \\
0 & 0 & 0 & 0 & 0 & 0 & 0 & 0 & 1 & 0 \\
0 & 0 & 0 & 0 & 0 & 0 & 0 & 0 & 0 & 1
\end{array}\right),&\quad
D_{(253)} &=
\left(\begin{array}{@{}R@{}R@{}R@{}R|@{~}R@{}R@{}R@{}R@{}R@{}R@{}}
1 & 0 & 0 & 0 & 0 & 0 & 0 & 0 & 0 & 0 \\
0 & 1 & 0 & 0 & 0 & 0 & 0 & 0 & 0 & 0 \\
0 & 0 & 1 & 0 & 0 & 0 & 0 & 0 & 0 & 0 \\
0 & 0 & 0 & 1 & 0 & 0 & 0 & 0 & 0 & 0 \\
\hline
0 & 0 & 0 & 0 & 1 & 0 & 0 & 0 & 0 & 0 \\
0 & 0 & 0 & 0 & 0 & 1 & 0 & 0 & 0 & 0 \\
0 & 0 & 0 & 0 & 1 & 0 & 0 & 0 & 1 & 0 \\
0 & 0 & 0 & 0 & 0 & 0 & 0 & 1 & 0 & 0 \\
0 & 0 & 0 & 0 & 0 & 0 & 0 & 0 & 1 & 0 \\
0 & 0 & 0 & 0 & 0 & 0 & 0 & 0 & 0 & 1
\end{array}\right),\\
D_{(254)} &=
\left(\begin{array}{@{}R@{}R@{}R@{}R|@{~}R@{}R@{}R@{}R@{}R@{}R@{}}
1 & 0 & 0 & 0 & 0 & 0 & 0 & 0 & 0 & 0 \\
0 & 1 & 0 & 0 & 0 & 0 & 0 & 0 & 0 & 0 \\
0 & 0 & 1 & 0 & 0 & 0 & 0 & 0 & 0 & 0 \\
0 & 0 & 0 & 1 & 0 & 0 & 0 & 0 & 0 & 0 \\
\hline
0 & 0 & 0 & 0 & 1 & 0 & 0 & 0 & 0 & 0 \\
0 & 0 & 0 & 0 & 0 & 1 & 0 & 0 & 0 & 0 \\
0 & 0 & 0 & 0 & 0 & 1 & 0 & 0 & 0 & 1 \\
0 & 0 & 0 & 0 & 0 & 0 & 0 & 1 & 0 & 0 \\
0 & 0 & 0 & 0 & 0 & 0 & 0 & 0 & 1 & 0 \\
0 & 0 & 0 & 0 & 0 & 0 & 0 & 0 & 0 & 1
\end{array}\right),&\quad
D_{(341)} &=
\left(\begin{array}{@{}R@{}R@{}R@{}R|@{~}R@{}R@{}R@{}R@{}R@{}R@{}}
1 & 0 & 0 & 0 & 0 & 0 & 0 & 0 & 0 & 0 \\
0 & 1 & 0 & 0 & 0 & 0 & 0 & 0 & 0 & 0 \\
0 & 0 & 1 & 0 & 0 & 0 & 0 & 0 & 0 & 0 \\
0 & 0 & 0 & 1 & 0 & 0 & 0 & 0 & 0 & 0 \\
\hline
0 & 0 & 0 & 0 & 1 & 0 & 0 & 0 & 0 & 0 \\
0 & 0 & 0 & 0 & 0 & 1 & 0 & 0 & 0 & 0 \\
0 & 0 & 0 & 0 & 0 & 0 & 1 & 0 & 0 & 0 \\
0 & 0 & 0 & 0 & 0 & 0 & 0 & 0 & 0 & 0 \\
0 & 0 & 0 & 0 & 0 & 0 & 0 & 0 & 1 & 0 \\
0 & 0 & 0 & 0 & 0 & 0 & 0 & 0 & 0 & 1
\end{array}\right),\\
D_{(342)} &=
\left(\begin{array}{@{}R@{}R@{}R@{}R|@{~}R@{}R@{}R@{}R@{}R@{}R@{}}
1 & 0 & 0 & 0 & 0 & 0 & 0 & 0 & 0 & 0 \\
0 & 1 & 0 & 0 & 0 & 0 & 0 & 0 & 0 & 0 \\
0 & 0 & 1 & 0 & 0 & 0 & 0 & 0 & 0 & 0 \\
0 & 0 & 0 & 1 & 0 & 0 & 0 & 0 & 0 & 0 \\
\hline
0 & 0 & 0 & 0 & 1 & 0 & 0 & 0 & 0 & 0 \\
0 & 0 & 0 & 0 & 0 & 1 & 0 & 0 & 0 & 0 \\
0 & 0 & 0 & 0 & 0 & 0 & 1 & 0 & 0 & 0 \\
0 & 0 & 0 & 0 & -1 & 1 & 0 & 0 & 0 & 0 \\
0 & 0 & 0 & 0 & 0 & 0 & 0 & 0 & 1 & 0 \\
0 & 0 & 0 & 0 & 0 & 0 & 0 & 0 & 0 & 1
\end{array}\right),&\quad
D_{(345)} &=
\left(\begin{array}{@{}R@{}R@{}R@{}R|@{~}R@{}R@{}R@{}R@{}R@{}R@{}}
1 & 0 & 0 & 0 & 0 & 0 & 0 & 0 & 0 & 0 \\
0 & 1 & 0 & 0 & 0 & 0 & 0 & 0 & 0 & 0 \\
0 & 0 & 1 & 0 & 0 & 0 & 0 & 0 & 0 & 0 \\
0 & 0 & 0 & 1 & 0 & 0 & 0 & 0 & 0 & 0 \\
\hline
0 & 0 & 0 & 0 & 1 & 0 & 0 & 0 & 0 & 0 \\
0 & 0 & 0 & 0 & 0 & 1 & 0 & 0 & 0 & 0 \\
0 & 0 & 0 & 0 & 0 & 0 & 1 & 0 & 0 & 0 \\
0 & 0 & 0 & 0 & 0 & 0 & 0 & 0 & 1 & -1 \\
0 & 0 & 0 & 0 & 0 & 0 & 0 & 0 & 1 & 0 \\
0 & 0 & 0 & 0 & 0 & 0 & 0 & 0 & 0 & 1
\end{array}\right),\\
D_{(351)} &=
\left(\begin{array}{@{}R@{}R@{}R@{}R|@{~}R@{}R@{}R@{}R@{}R@{}R@{}}
1 & 0 & 0 & 0 & 0 & 0 & 0 & 0 & 0 & 0 \\
0 & 1 & 0 & 0 & 0 & 0 & 0 & 0 & 0 & 0 \\
0 & 0 & 1 & 0 & 0 & 0 & 0 & 0 & 0 & 0 \\
0 & 0 & 0 & 1 & 0 & 0 & 0 & 0 & 0 & 0 \\
\hline
0 & 0 & 0 & 0 & 1 & 0 & 0 & 0 & 0 & 0 \\
0 & 0 & 0 & 0 & 0 & 1 & 0 & 0 & 0 & 0 \\
0 & 0 & 0 & 0 & 0 & 0 & 1 & 0 & 0 & 0 \\
0 & 0 & 0 & 0 & 0 & 0 & 0 & 1 & 0 & 0 \\
0 & 0 & 0 & 0 & 0 & 0 & 0 & 0 & 0 & 0 \\
0 & 0 & 0 & 0 & 0 & 0 & 0 & 0 & 0 & 1
\end{array}\right),&\quad
D_{(352)} &=
\left(\begin{array}{@{}R@{}R@{}R@{}R|@{~}R@{}R@{}R@{}R@{}R@{}R@{}}
1 & 0 & 0 & 0 & 0 & 0 & 0 & 0 & 0 & 0 \\
0 & 1 & 0 & 0 & 0 & 0 & 0 & 0 & 0 & 0 \\
0 & 0 & 1 & 0 & 0 & 0 & 0 & 0 & 0 & 0 \\
0 & 0 & 0 & 1 & 0 & 0 & 0 & 0 & 0 & 0 \\
\hline
0 & 0 & 0 & 0 & 1 & 0 & 0 & 0 & 0 & 0 \\
0 & 0 & 0 & 0 & 0 & 1 & 0 & 0 & 0 & 0 \\
0 & 0 & 0 & 0 & 0 & 0 & 1 & 0 & 0 & 0 \\
0 & 0 & 0 & 0 & 0 & 0 & 0 & 1 & 0 & 0 \\
0 & 0 & 0 & 0 & -1 & 0 & 1 & 0 & 0 & 0 \\
0 & 0 & 0 & 0 & 0 & 0 & 0 & 0 & 0 & 1
\end{array}\right),\\
D_{(354)} &=
\left(\begin{array}{@{}R@{}R@{}R@{}R|@{~}R@{}R@{}R@{}R@{}R@{}R@{}}
1 & 0 & 0 & 0 & 0 & 0 & 0 & 0 & 0 & 0 \\
0 & 1 & 0 & 0 & 0 & 0 & 0 & 0 & 0 & 0 \\
0 & 0 & 1 & 0 & 0 & 0 & 0 & 0 & 0 & 0 \\
0 & 0 & 0 & 1 & 0 & 0 & 0 & 0 & 0 & 0 \\
\hline
0 & 0 & 0 & 0 & 1 & 0 & 0 & 0 & 0 & 0 \\
0 & 0 & 0 & 0 & 0 & 1 & 0 & 0 & 0 & 0 \\
0 & 0 & 0 & 0 & 0 & 0 & 1 & 0 & 0 & 0 \\
0 & 0 & 0 & 0 & 0 & 0 & 0 & 1 & 0 & 0 \\
0 & 0 & 0 & 0 & 0 & 0 & 0 & 1 & 0 & 1 \\
0 & 0 & 0 & 0 & 0 & 0 & 0 & 0 & 0 & 1
\end{array}\right),&\quad
D_{(451)} &=
\left(\begin{array}{@{}R@{}R@{}R@{}R|@{~}R@{}R@{}R@{}R@{}R@{}R@{}}
1 & 0 & 0 & 0 & 0 & 0 & 0 & 0 & 0 & 0 \\
0 & 1 & 0 & 0 & 0 & 0 & 0 & 0 & 0 & 0 \\
0 & 0 & 1 & 0 & 0 & 0 & 0 & 0 & 0 & 0 \\
0 & 0 & 0 & 1 & 0 & 0 & 0 & 0 & 0 & 0 \\
\hline
0 & 0 & 0 & 0 & 1 & 0 & 0 & 0 & 0 & 0 \\
0 & 0 & 0 & 0 & 0 & 1 & 0 & 0 & 0 & 0 \\
0 & 0 & 0 & 0 & 0 & 0 & 1 & 0 & 0 & 0 \\
0 & 0 & 0 & 0 & 0 & 0 & 0 & 1 & 0 & 0 \\
0 & 0 & 0 & 0 & 0 & 0 & 0 & 0 & 1 & 0 \\
0 & 0 & 0 & 0 & 0 & 0 & 0 & 0 & 0 & 0
\end{array}\right),\\
D_{(452)} &=
\left(\begin{array}{@{}R@{}R@{}R@{}R|@{~}R@{}R@{}R@{}R@{}R@{}R@{}}
1 & 0 & 0 & 0 & 0 & 0 & 0 & 0 & 0 & 0 \\
0 & 1 & 0 & 0 & 0 & 0 & 0 & 0 & 0 & 0 \\
0 & 0 & 1 & 0 & 0 & 0 & 0 & 0 & 0 & 0 \\
0 & 0 & 0 & 1 & 0 & 0 & 0 & 0 & 0 & 0 \\
\hline
0 & 0 & 0 & 0 & 1 & 0 & 0 & 0 & 0 & 0 \\
0 & 0 & 0 & 0 & 0 & 1 & 0 & 0 & 0 & 0 \\
0 & 0 & 0 & 0 & 0 & 0 & 1 & 0 & 0 & 0 \\
0 & 0 & 0 & 0 & 0 & 0 & 0 & 1 & 0 & 0 \\
0 & 0 & 0 & 0 & 0 & 0 & 0 & 0 & 1 & 0 \\
0 & 0 & 0 & 0 & 0 & -1 & 1 & 0 & 0 & 0
\end{array}\right),&\quad
D_{(453)} &=
\left(\begin{array}{@{}R@{}R@{}R@{}R|@{~}R@{}R@{}R@{}R@{}R@{}R@{}}
1 & 0 & 0 & 0 & 0 & 0 & 0 & 0 & 0 & 0 \\
0 & 1 & 0 & 0 & 0 & 0 & 0 & 0 & 0 & 0 \\
0 & 0 & 1 & 0 & 0 & 0 & 0 & 0 & 0 & 0 \\
0 & 0 & 0 & 1 & 0 & 0 & 0 & 0 & 0 & 0 \\
\hline
0 & 0 & 0 & 0 & 1 & 0 & 0 & 0 & 0 & 0 \\
0 & 0 & 0 & 0 & 0 & 1 & 0 & 0 & 0 & 0 \\
0 & 0 & 0 & 0 & 0 & 0 & 1 & 0 & 0 & 0 \\
0 & 0 & 0 & 0 & 0 & 0 & 0 & 1 & 0 & 0 \\
0 & 0 & 0 & 0 & 0 & 0 & 0 & 0 & 1 & 0 \\
0 & 0 & 0 & 0 & 0 & 0 & 0 & -1 & 1 & 0
\end{array}\right).
\end{alignat*}
}%

The matrices $G_{\omega}$:
{\renewcommand{\arraystretch}{0.7}%
\begin{alignat*}{2}
G_{(123)} &= \left(\begin{array}{@{}R@{}R@{}R@{}R@{}R@{}R@{}}
0 & 0 & 0 & 0 & 0 & 0 \\
-1 & 1 & 0 & 0 & 0 & 0 \\
-1 & 0 & 1 & 0 & 0 & 0 \\
0 & 0 & 0 & 1 & 0 & 0 \\
0 & 0 & 0 & 0 & 1 & 0 \\
0 & 0 & 0 & 0 & 0 & 1
\end{array}\right),&\quad
G_{(124)} &= \left(\begin{array}{@{}R@{}R@{}R@{}R@{}R@{}R@{}}
1 & -1 & 0 & 0 & 0 & 0 \\
0 & 0 & 0 & 0 & 0 & 0 \\
0 & -1 & 1 & 0 & 0 & 0 \\
0 & 0 & 0 & 1 & 0 & 0 \\
0 & 0 & 0 & 0 & 1 & 0 \\
0 & 0 & 0 & 0 & 0 & 1
\end{array}\right),\\
G_{(125)} &= \left(\begin{array}{@{}R@{}R@{}R@{}R@{}R@{}R@{}}
1 & 0 & -1 & 0 & 0 & 0 \\
0 & 1 & -1 & 0 & 0 & 0 \\
0 & 0 & 0 & 0 & 0 & 0 \\
0 & 0 & 0 & 1 & 0 & 0 \\
0 & 0 & 0 & 0 & 1 & 0 \\
0 & 0 & 0 & 0 & 0 & 1
\end{array}\right),&\quad
G_{(132)} &= \left(\begin{array}{@{}R@{}R@{}R@{}R@{}R@{}R@{}}
0 & 0 & 0 & 0 & 0 & 0 \\
0 & 1 & 0 & 0 & 0 & 0 \\
0 & 0 & 1 & 0 & 0 & 0 \\
1 & 0 & 0 & 1 & 0 & 0 \\
1 & 0 & 0 & 0 & 1 & 0 \\
0 & 0 & 0 & 0 & 0 & 1
\end{array}\right),\\
G_{(134)} &= \left(\begin{array}{@{}R@{}R@{}R@{}R@{}R@{}R@{}}
1 & 0 & 0 & 1 & 0 & 0 \\
0 & 1 & 0 & 0 & 0 & 0 \\
0 & 0 & 1 & 0 & 0 & 0 \\
0 & 0 & 0 & 0 & 0 & 0 \\
0 & 0 & 0 & -1 & 1 & 0 \\
0 & 0 & 0 & 0 & 0 & 1
\end{array}\right),&\quad
G_{(135)} &= \left(\begin{array}{@{}R@{}R@{}R@{}R@{}R@{}R@{}}
1 & 0 & 0 & 0 & 1 & 0 \\
0 & 1 & 0 & 0 & 0 & 0 \\
0 & 0 & 1 & 0 & 0 & 0 \\
0 & 0 & 0 & 1 & -1 & 0 \\
0 & 0 & 0 & 0 & 0 & 0 \\
0 & 0 & 0 & 0 & 0 & 1
\end{array}\right),\\
G_{(142)} &= \left(\begin{array}{@{}R@{}R@{}R@{}R@{}R@{}R@{}}
1 & 0 & 0 & 0 & 0 & 0 \\
0 & 0 & 0 & 0 & 0 & 0 \\
0 & 0 & 1 & 0 & 0 & 0 \\
0 & -1 & 0 & 1 & 0 & 0 \\
0 & 0 & 0 & 0 & 1 & 0 \\
0 & 1 & 0 & 0 & 0 & 1
\end{array}\right),&\quad
G_{(143)} &= \left(\begin{array}{@{}R@{}R@{}R@{}R@{}R@{}R@{}}
1 & 0 & 0 & 0 & 0 & 0 \\
0 & 1 & 0 & -1 & 0 & 0 \\
0 & 0 & 1 & 0 & 0 & 0 \\
0 & 0 & 0 & 0 & 0 & 0 \\
0 & 0 & 0 & 0 & 1 & 0 \\
0 & 0 & 0 & 1 & 0 & 1
\end{array}\right),\\
G_{(145)} &= \left(\begin{array}{@{}R@{}R@{}R@{}R@{}R@{}R@{}}
1 & 0 & 0 & 0 & 0 & 0 \\
0 & 1 & 0 & 0 & 0 & 1 \\
0 & 0 & 1 & 0 & 0 & 0 \\
0 & 0 & 0 & 1 & 0 & 1 \\
0 & 0 & 0 & 0 & 1 & 0 \\
0 & 0 & 0 & 0 & 0 & 0
\end{array}\right),&\quad
G_{(152)} &= \left(\begin{array}{@{}R@{}R@{}R@{}R@{}R@{}R@{}}
1 & 0 & 0 & 0 & 0 & 0 \\
0 & 1 & 0 & 0 & 0 & 0 \\
0 & 0 & 0 & 0 & 0 & 0 \\
0 & 0 & 0 & 1 & 0 & 0 \\
0 & 0 & -1 & 0 & 1 & 0 \\
0 & 0 & -1 & 0 & 0 & 1
\end{array}\right),\\
G_{(153)} &= \left(\begin{array}{@{}R@{}R@{}R@{}R@{}R@{}R@{}}
1 & 0 & 0 & 0 & 0 & 0 \\
0 & 1 & 0 & 0 & 0 & 0 \\
0 & 0 & 1 & 0 & -1 & 0 \\
0 & 0 & 0 & 1 & 0 & 0 \\
0 & 0 & 0 & 0 & 0 & 0 \\
0 & 0 & 0 & 0 & -1 & 1
\end{array}\right),&\quad
G_{(154)} &= \left(\begin{array}{@{}R@{}R@{}R@{}R@{}R@{}R@{}}
1 & 0 & 0 & 0 & 0 & 0 \\
0 & 1 & 0 & 0 & 0 & 0 \\
0 & 0 & 1 & 0 & 0 & -1 \\
0 & 0 & 0 & 1 & 0 & 0 \\
0 & 0 & 0 & 0 & 1 & -1 \\
0 & 0 & 0 & 0 & 0 & 0
\end{array}\right),\\
G_{(231)} &= \left(\begin{array}{@{}R@{}R@{}R@{}R@{}R@{}R@{}}
0 & 0 & 0 & 0 & 0 & 0 \\
0 & 1 & 0 & 0 & 0 & 0 \\
0 & 0 & 1 & 0 & 0 & 0 \\
0 & 0 & 0 & 1 & 0 & 0 \\
0 & 0 & 0 & 0 & 1 & 0 \\
0 & 0 & 0 & 0 & 0 & 1
\end{array}\right),&\quad
G_{(234)} &= \left(\begin{array}{@{}R@{}R@{}R@{}R@{}R@{}R@{}}
0 & 1 & 0 & -1 & 0 & 0 \\
0 & 1 & 0 & 0 & 0 & 0 \\
0 & 0 & 1 & 0 & 0 & 0 \\
0 & 0 & 0 & 1 & 0 & 0 \\
0 & 0 & 0 & 0 & 1 & 0 \\
0 & 0 & 0 & 0 & 0 & 1
\end{array}\right),\\
G_{(235)} &= \left(\begin{array}{@{}R@{}R@{}R@{}R@{}R@{}R@{}}
0 & 0 & 1 & 0 & -1 & 0 \\
0 & 1 & 0 & 0 & 0 & 0 \\
0 & 0 & 1 & 0 & 0 & 0 \\
0 & 0 & 0 & 1 & 0 & 0 \\
0 & 0 & 0 & 0 & 1 & 0 \\
0 & 0 & 0 & 0 & 0 & 1
\end{array}\right),&\quad
G_{(241)} &= \left(\begin{array}{@{}R@{}R@{}R@{}R@{}R@{}R@{}}
1 & 0 & 0 & 0 & 0 & 0 \\
0 & 0 & 0 & 0 & 0 & 0 \\
0 & 0 & 1 & 0 & 0 & 0 \\
0 & 0 & 0 & 1 & 0 & 0 \\
0 & 0 & 0 & 0 & 1 & 0 \\
0 & 0 & 0 & 0 & 0 & 1
\end{array}\right),\\
G_{(243)} &= \left(\begin{array}{@{}R@{}R@{}R@{}R@{}R@{}R@{}}
1 & 0 & 0 & 0 & 0 & 0 \\
1 & 0 & 0 & 1 & 0 & 0 \\
0 & 0 & 1 & 0 & 0 & 0 \\
0 & 0 & 0 & 1 & 0 & 0 \\
0 & 0 & 0 & 0 & 1 & 0 \\
0 & 0 & 0 & 0 & 0 & 1
\end{array}\right),&\quad
G_{(245)} &= \left(\begin{array}{@{}R@{}R@{}R@{}R@{}R@{}R@{}}
1 & 0 & 0 & 0 & 0 & 0 \\
0 & 0 & 1 & 0 & 0 & -1 \\
0 & 0 & 1 & 0 & 0 & 0 \\
0 & 0 & 0 & 1 & 0 & 0 \\
0 & 0 & 0 & 0 & 1 & 0 \\
0 & 0 & 0 & 0 & 0 & 1
\end{array}\right),\\
G_{(251)} &= \left(\begin{array}{@{}R@{}R@{}R@{}R@{}R@{}R@{}}
1 & 0 & 0 & 0 & 0 & 0 \\
0 & 1 & 0 & 0 & 0 & 0 \\
0 & 0 & 0 & 0 & 0 & 0 \\
0 & 0 & 0 & 1 & 0 & 0 \\
0 & 0 & 0 & 0 & 1 & 0 \\
0 & 0 & 0 & 0 & 0 & 1
\end{array}\right),&\quad
G_{(253)} &= \left(\begin{array}{@{}R@{}R@{}R@{}R@{}R@{}R@{}}
1 & 0 & 0 & 0 & 0 & 0 \\
0 & 1 & 0 & 0 & 0 & 0 \\
1 & 0 & 0 & 0 & 1 & 0 \\
0 & 0 & 0 & 1 & 0 & 0 \\
0 & 0 & 0 & 0 & 1 & 0 \\
0 & 0 & 0 & 0 & 0 & 1
\end{array}\right),\\
G_{(254)} &= \left(\begin{array}{@{}R@{}R@{}R@{}R@{}R@{}R@{}}
1 & 0 & 0 & 0 & 0 & 0 \\
0 & 1 & 0 & 0 & 0 & 0 \\
0 & 1 & 0 & 0 & 0 & 1 \\
0 & 0 & 0 & 1 & 0 & 0 \\
0 & 0 & 0 & 0 & 1 & 0 \\
0 & 0 & 0 & 0 & 0 & 1
\end{array}\right),&\quad
G_{(341)} &= \left(\begin{array}{@{}R@{}R@{}R@{}R@{}R@{}R@{}}
1 & 0 & 0 & 0 & 0 & 0 \\
0 & 1 & 0 & 0 & 0 & 0 \\
0 & 0 & 1 & 0 & 0 & 0 \\
0 & 0 & 0 & 0 & 0 & 0 \\
0 & 0 & 0 & 0 & 1 & 0 \\
0 & 0 & 0 & 0 & 0 & 1
\end{array}\right),\\
G_{(342)} &= \left(\begin{array}{@{}R@{}R@{}R@{}R@{}R@{}R@{}}
1 & 0 & 0 & 0 & 0 & 0 \\
0 & 1 & 0 & 0 & 0 & 0 \\
0 & 0 & 1 & 0 & 0 & 0 \\
-1 & 1 & 0 & 0 & 0 & 0 \\
0 & 0 & 0 & 0 & 1 & 0 \\
0 & 0 & 0 & 0 & 0 & 1
\end{array}\right),&\quad
G_{(345)} &= \left(\begin{array}{@{}R@{}R@{}R@{}R@{}R@{}R@{}}
1 & 0 & 0 & 0 & 0 & 0 \\
0 & 1 & 0 & 0 & 0 & 0 \\
0 & 0 & 1 & 0 & 0 & 0 \\
0 & 0 & 0 & 0 & 1 & -1 \\
0 & 0 & 0 & 0 & 1 & 0 \\
0 & 0 & 0 & 0 & 0 & 1
\end{array}\right),\\
G_{(351)} &= \left(\begin{array}{@{}R@{}R@{}R@{}R@{}R@{}R@{}}
1 & 0 & 0 & 0 & 0 & 0 \\
0 & 1 & 0 & 0 & 0 & 0 \\
0 & 0 & 1 & 0 & 0 & 0 \\
0 & 0 & 0 & 1 & 0 & 0 \\
0 & 0 & 0 & 0 & 0 & 0 \\
0 & 0 & 0 & 0 & 0 & 1
\end{array}\right),&\quad
G_{(352)} &= \left(\begin{array}{@{}R@{}R@{}R@{}R@{}R@{}R@{}}
1 & 0 & 0 & 0 & 0 & 0 \\
0 & 1 & 0 & 0 & 0 & 0 \\
0 & 0 & 1 & 0 & 0 & 0 \\
0 & 0 & 0 & 1 & 0 & 0 \\
-1 & 0 & 1 & 0 & 0 & 0 \\
0 & 0 & 0 & 0 & 0 & 1
\end{array}\right),\\
G_{(354)} &= \left(\begin{array}{@{}R@{}R@{}R@{}R@{}R@{}R@{}}
1 & 0 & 0 & 0 & 0 & 0 \\
0 & 1 & 0 & 0 & 0 & 0 \\
0 & 0 & 1 & 0 & 0 & 0 \\
0 & 0 & 0 & 1 & 0 & 0 \\
0 & 0 & 0 & 1 & 0 & 1 \\
0 & 0 & 0 & 0 & 0 & 1
\end{array}\right),&\quad
G_{(451)} &= \left(\begin{array}{@{}R@{}R@{}R@{}R@{}R@{}R@{}}
1 & 0 & 0 & 0 & 0 & 0 \\
0 & 1 & 0 & 0 & 0 & 0 \\
0 & 0 & 1 & 0 & 0 & 0 \\
0 & 0 & 0 & 1 & 0 & 0 \\
0 & 0 & 0 & 0 & 1 & 0 \\
0 & 0 & 0 & 0 & 0 & 0
\end{array}\right),\\
G_{(452)} &= \left(\begin{array}{@{}R@{}R@{}R@{}R@{}R@{}R@{}}
1 & 0 & 0 & 0 & 0 & 0 \\
0 & 1 & 0 & 0 & 0 & 0 \\
0 & 0 & 1 & 0 & 0 & 0 \\
0 & 0 & 0 & 1 & 0 & 0 \\
0 & 0 & 0 & 0 & 1 & 0 \\
0 & -1 & 1 & 0 & 0 & 0
\end{array}\right),&\quad
G_{(453)} &= \left(\begin{array}{@{}R@{}R@{}R@{}R@{}R@{}R@{}}
1 & 0 & 0 & 0 & 0 & 0 \\
0 & 1 & 0 & 0 & 0 & 0 \\
0 & 0 & 1 & 0 & 0 & 0 \\
0 & 0 & 0 & 1 & 0 & 0 \\
0 & 0 & 0 & 0 & 1 & 0 \\
0 & 0 & 0 & -1 & 1 & 0
\end{array}\right).
\end{alignat*}
}%

The spectral radius of the matrix
{\renewcommand{\arraystretch}{0.7}%
\[
G_{(143)}G_{(231)}G_{(245)}G_{(342)}G_{(451)}G_{(124)}G_{(453)}
=
\left(\begin{array}{@{}R@{}R@{}R@{}R@{}R@{}R@{}}
0 & 0 & 0 & 0 & 0 & 0 \\
1 & -2 & 1 & 0 & 0 & 0 \\
0 & -1 & 1 & 0 & 0 & 0 \\
0 & 0 & 0 & 0 & 0 & 0 \\
0 & 0 & 0 & 0 & 1 & 0 \\
-1 & 1 & 0 & 0 & 0 & 0
\end{array}\right),
\]}

\noindent is $\frac{1+\sqrt{5}}{2}$, and its eigenvalues ordered by
decreasing of their modulus are:
\[
-\frac{1+\sqrt{5}}{2},~1,~\frac{\sqrt{5}-1}{2},~0,~0,~0.
\]

\end{document}